\begin{document}
\newtheorem{theoreme}{Theorem}
\newtheorem{ex}{Example}
\newtheorem{definition}{Definition}
\newtheorem{lemme}{Lemma}
\newtheorem{remarque}{Remark}
\newtheorem{exemple}{Example}
\newtheorem{proposition}{Proposition}
\newtheorem{corolaire}{Corollary}
\newtheorem{hyp}{Hypothesis}
\newtheorem*{rec}{Recurrence Hypothesis}
\newcommand\bel{>}
\newcommand\N{\mathbb{N}}
\newcommand\Z{\mathbb{Z}}
\newcommand\R{\mathbb{R}}
\newcommand\C{\mathbb{C}}
\newcommand\Sp{\mathbb{S}}
\newcommand\hp{\mathcal{T}^{d-1}X}

\title{Lower bounds for the number of nodal domains for sums of two distorted plane waves in non-positive curvature}
\author{Maxime Ingremeau}

\maketitle
\begin{abstract}
In this paper, we will consider generalised eigenfunctions of the Laplacian on some surfaces of infinite area. We will be interested in lower bounds on the number of nodal domains of such eigenfunctions which are included in a given bounded set.

We will first of all consider finite sums of plane waves, and give a criterion on the amplitudes and directions of propagation of these plane waves which guarantees an optimal lower bound, of the same order as Courant's upper bound.

As an application, we will obtain optimal lower bounds for the number of nodal domains of distorted plane waves on some families of surfaces of non-positive curvature.
\end{abstract}

\section{Introduction}
Let $(X,g)$ be a compact Riemannian manifold, and let us denote by $(\varphi_j)_{j\in \mathbb{N}}$ an orthonormal basis of $L^2(X)$ made of eigenfunctions of the Laplace-Beltrami operator:
\begin{equation*}
-\Delta_g \varphi_j = \lambda_j^2 \varphi_j.
\end{equation*}

The nodal domains of $\varphi_j$ are the connected components of $X\backslash \{\varphi_j^{-1}(0)\}$. Let us denote by $\mathcal{N}_j$ the number of nodal domains of $\varphi_j$. It is known since Courant (\cite{CouHil}) that we have
\begin{equation}\label{Courant}
\mathcal{N}_j = O(\lambda_j^2).
\end{equation}

This bound is in general not optimal. Indeed, we know since Stern that there exists some examples of spherical harmonics $\phi_j$ having only two nodal domains, while $\lambda_j\rightarrow \infty$ (\cite{stern1925bemerkungen}, see also \cite[Theorem 2.1.4]{HanLin}). 
However, it is thought that in a “generic” setting, the bound (\ref{Courant}) should be optimal.

On the two-dimensional torus, Buckley and Wigman (\cite{buckley2015number}), using ideas from Bourgain (\cite{bourgain2014toral}) were able to build many families $(\phi_j)$ of eigenvalues of $-\Delta$ which satisfied $\mathcal{N}_j \geq c \lambda_j^2$ for some $c > 0$, thus saturating the Courant bound. To do so, they were able to relate locally the nodal domains of trigonometric polynomials to the nodal domains of Random Gaussian Fields, and to use the powerful machinery developed by Nazarov and Sodin in this framework (\cite{nazarov2009number}, \cite{nazarov2015asymptotic}). Actually, Buckley and Wigman are able to show that $\mathcal{N}_j \sim c_0 \lambda_j^2$, where $c_0$ is a (hardly explicit) constant depending on the family $(\phi_j)$, known as the “Nazarov-Sodin constant”.

Gaussian Random Fields should be useful to describe nodal domains on manifolds which are more general that the torus. Indeed,
it is believed since the work of Berry \cite{berry1977regular} that generic eigenfunctions of $-\Delta$ on compact manifolds of negative curvature behave according to the so-called \emph{random wave model}, and hence their nodal domains should behave somewhat like those of Gaussian Random Fields. 

\subsubsection*{Nodal domains on manifolds of infinite volume}
In this paper, we will mainly be interested in eigenfunctions of the Laplacian on \emph{manifolds of infinite volume}, hence non-compact.
On such manifolds, there are no $L^2$-eigenfunctions, but in general, for any $h>0$\footnote{The parameter $h>0$ here corresponds to $\lambda_j^{-1}$ in the previous paragraph. We will therefore be considering the semi-classical limit $h\rightarrow 0$.} , there exists many solutions $\phi_h\in C^\infty(X)$ to the equation
$$-h^2\Delta_g \phi_h =  \phi_h.$$

If $\phi_h$ is such an eigenfunction, it will not be compactly supported, hence it may have infinitely many nodal domains.
However, if $\Omega\subset X$ is a bounded set, we may consider

\begin{equation}\label{supnod}
\begin{aligned}
N_{\subset\Omega}(\phi_h) &= \sharp\{ \text{ nodal domains of } \phi_h \text{ included in } \Omega\}.
\end{aligned}
\end{equation}

Note that, if $\Omega\subset \Omega'$, then $N_{\subset\Omega}(\phi_h)\leq N_{\subset\Omega'}(\phi_h)$.
Furthermore, for any $\Omega\subset X$ bounded with smooth boundary, there exists $C_{\Omega}$ such that 

\begin{equation}\label{CourantInfini}
N_{\subset\Omega}(\phi)\leq \frac{C_\Omega} {h^{d}}.
\end{equation}

To prove this bound, we may just use \cite[Lemme 16]{berard1982inegalites} (which generalizes results of \cite{pleijel1956remarks} and \cite{peetre1957generalization}), which gives us a constant $c_\Omega\bel 0$ such that for any solution of $(-h^2\Delta -1)f=0$, every nodal domain of $f$ included in $\Omega$ has a volume larger than $c_\Omega h^d$.

The estimate (\ref{CourantInfini}) can be seen as an analogue of (\ref{Courant}) on manifolds of infinite volume. Just as in the compact case, it is natural to wonder if a lower bound of the same order holds. We will give a positive answer to this question for certain eigenfunctions on some families of surfaces which are \emph{Euclidean near infinity}.

\subsubsection*{Distorted plane waves on Euclidean near infinity surfaces}
Consider a Riemannian surface $(X,g)$ such that there exists a bounded open set $X_0\subset X$ and $R_0\bel 0$ such that $(X\backslash X_0,g)$ and $(\mathbb{R}^2\backslash B(0,R_0),g_{eucl})$ are isometric (we shall say that such a surface is \emph{Euclidean near infinity}). 

The distorted plane waves on $X$ are a family of functions $E_h(x;\omega)$ with parameters $\omega\in\mathbb{S}^{1}$ (the direction of propagation of the incoming wave) and $h$ (a semiclassical parameter corresponding to the inverse of the square root of the energy) such that
\begin{equation}\label{eigen}
(-h^2\Delta_g-1) E_h(x;\omega,g)=0, 
\end{equation}
and which can be put in the form
\begin{equation}\label{jeanne}
E_h(x;\omega,g)= (1-\chi)e^{i x\cdot \omega /h} + E_{out}.
\end{equation}
Here, $\chi\in C_c^\infty$ is such that $\chi\equiv 1$ on $X_0$, and $E_{out}$ is \emph{outgoing} in the sense that it satisfies the \emph{Sommerfeld radiation condition}, were $|x|$ is the distance to any fixed point in $X$:
\begin{equation}\label{Sommerfeld}
\lim \limits_{|x|\rightarrow \infty} |x|^{(d-1)/2} \Big{(} \frac{\partial}{\partial |x|} - \frac{i }{h}\Big{)} E_h^1 = 0.
\end{equation}

It can be shown (cf. \cite[\S 2]{Mel} or \cite[\S 4]{Resonances}) that there is only one function $E_h(\cdot;\omega)$ such that (\ref{eigen}) is satisfied and which can be put in the form (\ref{jeanne}).
In the sequel, we will mainly be interested on the nodal domains of the \emph{sum of two distorted plane waves} with close enough directions of propagation.

To obtain results on the nodal domains of such  eigenfunctions, we need to make some assumptions on the classical dynamics of the geodesic flow on $(X,g)$.

\subsubsection*{Classical dynamics}
If $(X,g)$ is a Riemannian surface which is Euclidean near infinity. We denote by $(\Phi_g^t)_{t\in \R}: S^*X \mapsto S^*X$ the geodesic flow induced by the metric $g$.

The \emph{trapped set} for the metric $g$ is defined as
\begin{equation*}
K_g:=\{(x,\xi)\in S^*X; \Phi_g^t(x,\xi) \text{ remains in a bounded set for all } t\in \mathbb{R}\}.
\end{equation*}

In the sequel, we will always make the following two assumptions:
\begin{equation}\label{Hyperbolique}
K_g \text{ is a hyperbolic set for } \Phi_g^t.
\end{equation}
\begin{equation}\label{pression}
dim_{Haus} \big{(}K_g\big{)} < 2,
\end{equation}
where $dim_{Haus}$ denotes the Hausdorff dimension.

These two assumptions are stable by sufficiently small perturbations of the metric (for (\ref{Hyperbolique}), this is known as the \emph{Structural stability} of hyperbolic sets, cf. \cite[Chapter 17]{KH}). Note that if the sectional curvature is strictly negative in a neighbourhood of $\pi_X(K_g)$, where $\pi_X$ denotes the projection on the base manifold, then (\ref{Hyperbolique}) is automatically satisfied.

\subsubsection*{Generic perturbations of a metric}
Our result will concern distorted plane waves for a generic perturbation of a metric satisfying (\ref{Hyperbolique}) and (\ref{pression}). Let us define what we mean by generic.

Let $(X,g)$ be a Riemannian manifold, and $\Omega\subset X$ be a bounded open set. We denote by $\mathcal{G}_\Omega$ the set of metrics on $X$ which coincide with $g$ outside of $\Omega$. For any $k\geq 2$, the distance $\|g-g'\|_{C^k(\Omega)}$ between elements of $\mathcal{G}_\Omega$ is not intrinsic, since we define it using a coordinate chart. However, the topology this distance induces does not depend on the choice of coordinates.

Let $P(g')$ be a property which can be satisfied by a metric $g'$ on $X$. We shall say that $P$ is satisfied for a generic perturbation of $g$ in $\Omega$ if there exists an open neighbourhood $G_0$ of $g$ in $\mathcal{G}_\Omega$ such that the set of $\{g'\in G_0; P(g') \text{ is satisfied} \}$ is open and dense in $G_0$ for the $C^k(\Omega)$ topology.
\subsubsection*{Main theorem}

Our main theorem says that, for a generic perturbation of a metric satisfying (\ref{Hyperbolique}) and (\ref{pression}), the sum of the real parts of two distorted plane waves with close enough directions of propagation will have at least $c h^{-2}$ nodal domains in a given bounded set $\Omega$, for some $c>0$ depending on $\Omega$.

\begin{theoreme}\label{ivrogne}
Let $(X,g)$ be a Riemannian surface of non-positive curvature which is Euclidean near infinity, and which satisfies (\ref{Hyperbolique}) and (\ref{pression}). There exists $\epsilon\bel 0$ such that for any $\omega_0, \omega_1\in \Sp^1$ with $|\omega_0-\omega_1|<\epsilon$ and $\omega_0\neq \omega_1$, and for any non-empty open set $\Omega\subset X$, the following holds. For a generic $C^k(X_0)$ perturbation $g'$ of $g$, there exists a constant $c\bel 0$ and $h_0\bel 0$ such that for all $0<h<h_0$, we have
 the function
$$N_{\subset \Omega}\big{(}\Re(E_h(\cdot,\omega_0;g')) + \Re(E_h(\cdot, \omega_1;g'))\big{)}\geq ch^{-2}.$$ 
\end{theoreme}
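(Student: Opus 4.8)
\emph{Sketch of the argument.} Since $N_{\subset\Omega}$ is non-decreasing in $\Omega$, I may shrink $\Omega$ to a small ball $B=B(x_0,r)$ and content myself with finding $\gtrsim h^{-2}$ nodal domains of
\[
F_h^{g'}:=\Re E_h(\cdot,\omega_0;g')+\Re E_h(\cdot,\omega_1;g')
\]
contained in $B$. Fix a small neighbourhood $G_0$ of $g$ in $\mathcal{G}_{X_0}$ on which (\ref{Hyperbolique}) and (\ref{pression}) still hold (possible by structural stability and the openness of the two hypotheses). For $g'\in G_0$, the first — purely semiclassical — step is to describe $E_h(\cdot,\omega;g')$ over the bounded region: because $K_{g'}$ is hyperbolic and $dim_{Haus}K_{g'}<2$ (which on a surface amounts to the negativity of a topological pressure $P(1/2)$, guaranteeing the convergence below), $E_h(\cdot,\omega;g')$ is a Lagrangian state there, so that on $B$
\[
E_h(x,\omega;g')=\sum_{j\in J}a^{\omega}_j(x)\,e^{i\varphi^{\omega}_j(x)/h}+O_{C^1(B)}(h^{\rho})
\]
for some $\rho>0$, where $J$ is countable, $\sum_{j}\|a^{\omega}_j\|_{C^1(B)}<\infty$, each $\varphi^{\omega}_j$ solves the eikonal equation $|\nabla\varphi^{\omega}_j|\equiv1$, the term $j=0$ is the direct wave $(1-\chi)e^{ix\cdot\omega/h}$, and the remaining phases encode the geodesics through $x$ whose backward flow is asymptotic to the incoming direction $\omega$. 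Non-positive curvature enters here: the absence of conjugate points forces the arrival momenta $\xi^{\omega}_j(x):=\nabla\varphi^{\omega}_j(x)$ to be pairwise distinct at every $x\in B$.

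Consequently $F_h^{g'}$ is, on $B$, a convergent superposition of WKB states with unit-speed phases drawn from the two families $\omega_0,\omega_1$. Keeping the $N$ largest terms and absorbing the remainder (whose $C^1(B)$-norm is $<\delta$ once $N=N(\delta)$ is large) into the error, $F_h^{g'}$ is uniformly $O(\delta)+O(h^{\rho})$-close, on scales $\gg h$, to an honest finite sum of plane waves with directions among $\{\xi^{\omega_0}_j(x_0),\xi^{\omega_1}_j(x_0)\}_{j\le N}$ and amplitudes $\{a^{\omega_0}_j(x_0),a^{\omega_1}_j(x_0)\}_{j\le N}$. To this local model I apply the criterion for sums of plane waves proved in the first part of the paper: it yields $N_{\subset B}(F_h^{g'})\ge ch^{-2}$ as soon as this finite list of directions and amplitudes satisfies the criterion's non-degeneracy condition on a set of base points $x_0\in\Omega$ of positive measure, the ``barrier'' nodal domains it produces being robust to $C^0$-small perturbations of the field. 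The restriction $|\omega_0-\omega_1|<\epsilon$ is used at this point: for nearby incoming directions the two families of branches are paired (same symbolic geodesic data, nearby momenta), so $F_h^{g'}$ inherits the slowly-beating grid pattern of $e^{ix\cdot\omega_0/h}+e^{ix\cdot\omega_1/h}$ — which carries the criterion's barrier — even when $B$ meets $X_0$ and the two direct terms are partly damped by $\chi$.

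It remains to check that, for a generic $g'\in G_0$, the non-degeneracy condition holds on a positive-measure set of $x_0\in\Omega$. Openness is clear: the hyperbolic structure and the Lagrangian data $(\varphi^{\omega_\iota}_j,a^{\omega_\iota}_j)$ depend continuously on $g'$ in $C^k(X_0)$ — via structural stability of $K_{g'}$ together with the stability of the eikonal/transport construction — so the condition, which constrains finitely many of the $\xi^{\omega_\iota}_j(x_0)$ and $a^{\omega_\iota}_j(x_0)$, is an open constraint on $g'$. For density I would exhibit, near any given $g'\in G_0$, an arbitrarily small perturbation supported in $X_0$ that displaces the scattered momenta $\xi^{\omega_\iota}_j(x_0)$ ($j\ge1$) into general position relative to one another and to $\omega_0,\omega_1$: such a perturbation acts non-trivially on the long geodesic arcs through the hyperbolic region feeding those branches, while the base points belonging to trapped branches form a set of measure zero (again because $dim_{Haus}K_{g'}<2$) and are avoided.

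The main difficulty lies in two places. First, transferring the criterion — stated for exact, constant-coefficient sums of plane waves — to $F_h^{g'}$, which is only locally and approximately such a sum, with slowly varying phases and amplitudes, a truncated infinite tail, and an $O(h^\rho)$ remainder: one must show that the count of barrier domains survives all of these perturbations, i.e.\ that these domains are deep and transverse enough to be stable. Second, the density half of the genericity statement: one must quantify how a metric perturbation localized in $X_0$ moves the momenta of branches built from arbitrarily long excursions through the trapped set, and verify that the resulting map $(\text{perturbation})\mapsto(\text{finite list of branch data})$ is submersive onto enough directions to escape the criterion's degenerate set; structural stability controls the deformation qualitatively, but its uniformity in $h$ and in the branch index $j$ is the delicate point.
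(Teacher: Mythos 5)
Your sketch follows the paper's route quite closely: it invokes the Lagrangian/WKB decomposition of $E_h(\cdot,\omega;g')$ over a bounded region (the content of \cite{Ing2}), truncates the expansion using the exponential decay of the coefficients, applies the finite plane-wave criterion of Theorem~\ref{phasesrandom} to the resulting local model, and handles the $\epsilon$-independence hypothesis by a generic metric perturbation supported in the interaction region. This is essentially the paper's argument. Two points where you are imprecise and where the paper is cleaner: first, the $\epsilon$-non-domination hypothesis of Theorem~\ref{phasesrandom} is not a ``non-degeneracy condition'' that has to be arranged; it holds automatically once $|\omega_0-\omega_1|$ is small, because by continuity of $\omega\mapsto a_\beta(x_0;\omega)$ the amplitudes $|a_\beta(x_0;\omega_0)|\approx|a_\beta(x_0;\omega_1)|$ come in pairs, which is exactly the ``grouped in pairs'' case noted after the definition of $\epsilon$-non-domination — your ``slowly-beating grid'' picture gestures at this but obscures the mechanism. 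Second, your density argument invokes a measure-zero set of ``base points on trapped branches'' to be avoided; the paper does not need (or use) any such avoidance. It fixes an arbitrary $x_0$, appeals to the fact that for distinct symbolic words $\beta$ the branches $\Phi^N_\beta(\Lambda_\omega)$ above $x_0$ are pairwise disjoint (Remark~\ref{Plusieurscoefs}), chooses for each non-direct branch a tiny open set $\mathcal{O}'_{\beta,\omega_i}$ traversed by that branch and no other, and then applies Lemma~\ref{perturbdevi} to steer each branch momentum independently; the $\epsilon$-independence of the two direct momenta $\omega_0\neq\omega_1$ seeds the inductive application of Remark~\ref{generik}. That disjointness, not a measure-zero argument, is what makes the perturbation map onto the branch data surjective enough — this is the precise answer to the ``submersivity'' worry you raise at the end. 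Your flagged difficulty of transferring the criterion to the approximate, slowly-varying setting is handled in the paper by the microlocal rescaling $x\mapsto\psi(hx)$ and a Taylor expansion of phase and amplitude, together with the $\epsilon$-stability built into the definition of nodal domain count, so the constant-coefficient model really does control the count.
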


The fact that we need to perturb the metric in a generic way is probably an artefact of the proof. However, it is not clear if we really need to have two distorted plane waves to produce $ch^{-2}$ nodal domains, or if a single distorted plane wave could do under some more stringent assumptions.

The cornerstone of the proof is Proposition \ref{amplitudesrandom}, which implies that the sum of three plane waves with random amplitudes will have a compact nodal domain with positive probability. Our proof, though elementary, works only in dimension 2, and we do not know if a similar result (with more plane waves) holds in higher dimension; if it did, Theorem \ref{ivrogne} would hold true in any dimension provided we replace the assumption on the Hausdorff dimension of trapped set by a topological pressure assumption as in \cite{Ing2}.

\subsubsection*{Idea of proof and organisation of the paper}
The proof will heavily rely on the results of \cite{Ing2}, which say that on manifolds of negative curvature with a condition on some topological pressure generalizing (\ref{pression}), distorted plane waves can be written locally as a sum of plane waves (see section \ref{rappel}).
 The phases of these plane waves are somehow "random", at least in a generic case, due to the chaotic dynamics induced by the negative curvature. However, the directions and amplitudes are perfectly deterministic, and the amplitudes decay exponentially. 
 
 The situation is therefore quite different from the framework of Gaussian random fields and from the Random Waves Model, and we are lead to study the nodal domains of a finite sum of plane waves with given amplitudes and direction of propagation, but random phases. More precisely, we look for criteria which guarantee that, with positive probability, such a function has at leat $cR^2$ nodal domains in a ball of radius $R$.

We will present such a criterion in section \ref{critere}. Though our study barely scratches the surface of the problem, the criterion we find is enough to obtain the desired result on sum of distorted plane waves, which we will prove in section \ref{preuveivrogne}.

\paragraph{Acknowledgement}
The author would like to thank Stéphane Nonnenmacher for supervising this project, and Igor Wigman for many explanation on his works. He would also like to thank Frédéric Naud for finding a mistake in the first version of the proof, and for useful discussion.

The author is partially supported by the Agence Nationale de la Recherche project GeRaSic
(ANR-13-BS01-0007-01).

\section{A criterion for a finite sum of plane waves to saturate the Courant bound}\label{critere}
\subsection{Definitions and statement of the criterion}
\subsubsection*{Stable nodal domains}
In the sequel, we will want to perturb slightly the functions we consider, so we have to give a definition of \emph{stable nodal domains}, which will not be affected by such perturbations.
\begin{definition}
Let $\Omega\subset \R^2$, and $f\in C^0(\Omega)$. Let $N\in\N$, $x_1,...,x_N\in \R^2$ and $\epsilon\bel 0$. We shall say that $x_1,...,x_N$ belong to different $\epsilon$-stable compact nodal domains of $f$ if for all $g\in C^0(\R^2)$ such that $\|g\|_{C^0} \leq \epsilon$, and for all $i,j=1,...,N$, $x_i$ belongs to a compact connected component of $\{x\in \Omega; f+g\neq 0\}$, and if $x_i$ and $x_j$ do not belong to the same connected component of $\{x\in \Omega; f+g\neq 0\}$.

If this is true for some choice of $x_1,...,x_N$, we shall say that \emph{$f$ has at least $N$ $\epsilon$-stable compact nodal domains.} We shall say that $f$ has $N$ $\epsilon$-stable compact connected components if $f$ has at least $N$ $\epsilon$-stable compact connected components, but $f$ does not have at least $N+1$ $\epsilon$-stable compact connected components.
\end{definition}

If $f\in C^0(\R^2)$, we shall write
\begin{equation}\label{defdomainenodal}
N_{f,\epsilon}(R)= \sharp \big{\{} \text{ Compact, } \epsilon-\text{stable nodal domains of } f \text { included in }B(0,R)\big{\}}.
\end{equation}

Note that $N_{f,\epsilon}$ is a non decreasing function.

In the sequel, we will be interested in compact nodal domains of a function of the form
 \begin{equation}\label{sommeondesplanes}
\sum_{i\in I} a_i \cos(k_i\cdot x + \theta_i).
\end{equation}

Theorem \ref{phasesrandom} below gives us a lower bound on the number of nodal domains of such a function, under some hypotheses on the direction $k_i$ and on the amplitudes $a_i$, which we shall now describe.

\subsubsection*{$\epsilon$-independence}
\begin{definition}
Let $k_1,...,k_n\in \R^2$, and let $\epsilon, T\bel 0$. We shall say that $k_1,...,k_n$ are $(\epsilon,T)$-independent if there exists $u\in\Sp^{1}$ such that for all $\theta,\theta'\in \mathbb{T}^n$, there exists $t(\theta,\theta')\in [0,T]$ such that
\begin{equation}\label{preskindep}
\Big{(}\theta + t (k_1\cdot u, ..., k_n\cdot u)\Big{)} \mod 1 \in B(\theta',\epsilon).
\end{equation}
We will sometimes say that $k_1,...,k_n$ are $\epsilon$-independent if there exists $T\bel 0$ such that $k_1,...,k_n$ are $(\epsilon,T)$-independent.
\end{definition}

Note that if a family $\mathbf{k}$ of vectors is $(\epsilon,T)$-independent, any non-empty subfamily of $\mathbf{k}$ is also $(\epsilon,T)$-independent.

For any $\epsilon>0$ and $n\in \N$, there exists $c(\epsilon)>0$ such that for any family of vectors $\mathbf{k}=(k_1,...,k_n)\in (\R^2)^n$, the family $\mathbf{k}$ is  $\epsilon$-independent if and only if there exists a $u\in \Sp^1$ such that
\begin{equation}\label{critereindep2}
\forall p_1,...,p_n\in \mathbb{Z}, \Big{(}\sum_i p_i k_i\cdot u=0\Big{)} \Rightarrow \Big{(} \forall i, |p_i|=0 \Big{)} \text{ ou } \Big{(} \exists i, |k_i|\geq c(\epsilon) \Big{)}.
\end{equation}

We refer the reader to \cite[\S 4]{berti2003drift} for a proof of this fact, and for a bound on $c(\epsilon)$.

By contraposition of (\ref{critereindep2}), the set of vectors which are not  $\epsilon$-independent is a union of a finite number of kernels of non-zero linear forms. Therefore, an application of Baire's Theorem gives us the following remark. 
\begin{remarque}\label{generik}
For any  $k_1,...,k_N\in \R^2$, for any $\epsilon,\delta\bel 0$, the set of $(k'_1,...,k'_N)\in (\R^{2})^N$ such that $(k_1+k'_1,...,k_N+k'_N)$ is $\epsilon$-independent and $|k'_i|\leq \delta$ for all $i=1,...,N$ is open and dense in $B(0,\delta)\subset \R^{2N}$.

Furthermore, if the family $(k_1,...,k_{n'})$ is $\epsilon$-independent for some $n'<N$, then the set of $(k'_{n'+1},...,k'_N)\in \R^{2(N-n')}$ such that $(k_1,...,k_{n'}, k_{n'+1}+ k'_{n'+1},...,k_N+k'_N)$ is $\epsilon$-independent and $|k'_i|\leq \delta$ for all $i=n'+1,...,N$ is open and dense in $B(0,\delta)\subset \R^{2(N-n')}$.
\end{remarque}

\subsubsection*{$\epsilon$-non-domination}
We shall ask that within the amplitudes $a_i$, there is not a subfamily of amplitudes which dominates all the others, in the sense of the following definition.
\begin{definition}
Let $\epsilon\bel 0$ and let $(a_i)_{i\in I}$ be a finite or countable family of real numbers. We shall say that $(a_i)_{i\in I}$ is \emph{$\epsilon$-non-dominated} if there exists $(u_i)_{i\in I}\in \{-1,1\}^{|I|}$ such that
\begin{equation*}
\Big{|} \sum_{i\in I} u_i a_i \Big{|} \leq \epsilon.
\end{equation*}
\end{definition}

For example, it is a standard exercise to show that if $I=\N$ and $a_i \longrightarrow 0$ but $\sum_{i\in \N} |a_i| = +\infty$, then $(a_i)$ is $\epsilon$-non-dominated for all $\epsilon\bel 0$.

If the $(a_i)_{i\in I}$ can be regrouped by pairs  $a_i, a_{i'}$ with $|a_i|=|a_{i'}|$, then the family will be $\epsilon$-non-dominated for any $\epsilon>0$. We will always be in this situation in section \ref{preuveivrogne}.

\subsubsection*{Statement of the criterion}

Let $\mathbf{k}=(k_i)_{i\in I}$ be family of vectors of $\Sp^{1}\subset \R^2$ indexed by a finite set $I$, and let $\mathbf{a}=(a_i)_{i\in I}$ be a set of positive real numbers indexed by $I$, such that $\sum_{i\in I} |a_i|^2=1$. We define the measure
\begin{equation*}
\mu_{\mathbf{k},\mathbf{a}} = \sum_{i\in I} |a_i|^2(\delta_{k_i}+\delta_{-k_i}),
\end{equation*}
which is a probability measure on $\Sp^{1}$, symmetric with respect to the origin.

If $\boldsymbol{\theta}=(\theta_i)_{i\in I}$ is a family of real numbers, we set
\begin{equation*}
f_{\mathbf{a},\mathbf{k},\boldsymbol{\theta}}(x) := \sum_{i\in I} a_i \cos(k_i\cdot x + \theta_i)
\end{equation*}

Recall that the quantity $N_{f,\epsilon}(r)$ has been defined in (\ref{defdomainenodal}).

\begin{theoreme}\label{phasesrandom}
Let $I$, $\mathbf{k}$, $\mathbf{a}$ and $\boldsymbol{\theta}$ be as above. Suppose that the measures $\mu_{\mathbf{k},\mathbf{a}}$ on $\Sp^{1}$ has at least $6$ points in its support.

Then there exists strictly positive constants $\mathcal{R}_0$, $\epsilon_0, \epsilon_1,\epsilon_2,\epsilon_3$ and $c$ depending only on $\sup_{i\in I} a_i$ and on the $6$ points in the support of $\mu$ and on their masses, such that the following holds.

Suppose that the vectors $(k_i)_{i\in I}$ are $(\epsilon_0,T_0)$-independent. Suppose furthermore that there exists a disjoint partition $\Sp^{1}= \bigsqcup_{l=1}^{L} S_l$ into sets of diameters all smaller that $\epsilon_1$, such that for all $l=1,...,L$, the set $\{a_i; i\in I ~~ \text{ and } k_i\in S_l \}$ is $\epsilon_2$ non-dominated.

Then for all $r\geq \mathcal{R}_0$, we have
\begin{equation*}
N_{f_{\mathbf{a},\mathbf{k},\boldsymbol{\theta}},\epsilon_3/2}(r) \geq c r^2.
\end{equation*}
\end{theoreme}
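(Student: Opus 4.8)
The plan is to produce, near a well-chosen set of "reference points", many locally stable sign patterns of $f_{\mathbf a,\mathbf k,\boldsymbol\theta}$ that force a compact nodal domain, and then count how many such disjoint configurations fit inside $B(0,r)$. The starting observation is that the $(\epsilon_0,T_0)$-independence hypothesis lets us move the phases: there is a unit vector $u$ such that translating $x\mapsto x+tu$ replaces $\theta_i$ by $\theta_i+t\,k_i\cdot u$, and as $t$ ranges over $[0,T_0]$ the resulting vector of phases $\pmod 1$ comes $\epsilon_0$-close to \emph{any} prescribed vector in $\mathbb T^{|I|}$. So in each "slab" $\{x_0+tu : t\in[0,T_0]\}$ we can realize, up to an $\epsilon_0$ error in the phases, any profile of the form $\sum_i a_i\cos(k_i\cdot(x-x_0)+\vartheta_i)$ with $\boldsymbol\vartheta$ of our choosing. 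The freedom in choosing $\boldsymbol\vartheta$ is exactly what we need to invoke the dimension-2 mechanism (the three-plane-wave / Proposition \ref{amplitudesrandom} phenomenon alluded to in the introduction) that produces a compact nodal domain.

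Concretely, I would first reduce to six waves: by hypothesis $\mathrm{supp}\,\mu_{\mathbf k,\mathbf a}$ has at least six points $\pm\kappa_1,\pm\kappa_2,\pm\kappa_3$ (three lines), and the $\epsilon_2$-non-domination of each cluster $S_l$ guarantees that, after choosing signs $u_i\in\{-1,1\}$ within each cluster, the contribution of the waves in $S_l$ behaves — up to the $\epsilon_1$ spread in direction and an $\epsilon_2$ defect in amplitude — like a single cosine with a controlled small amplitude in the "wrong" direction, while the genuine size is carried by a constant number of directions near the six support points. The point of $\epsilon_1,\epsilon_2$ small is that the full function $f$ is, on any ball of fixed radius, a $C^1$-small perturbation of an honest trigonometric polynomial $g(x)=\sum_{j=1}^{6} b_j\cos(\kappa_j\cdot x+\psi_j)$ built from the six support points with their masses as (essentially) the $b_j$; the remaining waves and the amplitude defects contribute uniformly less than some $\delta$ in $C^0$ on that ball, a $\delta$ we can make as small as we like by shrinking $\epsilon_1,\epsilon_2$. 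Then one chooses the phases $\psi_j$ (via the translation parameter $t$, using $\epsilon$-independence and paying $\epsilon_0$) so that $g$ has a nondegenerate compact nodal domain with a large sign margin — say $g>\eta>0$ at some point $p$ while $g<-\eta$ on a curve surrounding $p$. This margin survives perturbations of size $<\eta/3$, whence taking $\epsilon_3$ comparable to $\eta$ we get a genuine $\epsilon_3/2$-stable compact nodal domain of $f$ itself in that ball. The existence of such a robust compact nodal domain for a suitable choice of the six phases is the two-dimensional input; I would establish it by an explicit elementary construction — e.g. superpose two nearly-planar wave trains whose crest lines cross transversally to manufacture an isolated bump enclosed by a trough — exactly in the spirit of the promised Proposition \ref{amplitudesrandom}.

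The counting step is then a packing argument. The construction above is translation-covariant: replacing $x_0$ by $x_0+v$ for any $v\in\mathbb R^2$ simply shifts the whole picture, and the bad waves/defects estimate is uniform in the center. So we tile $B(0,r)$ by $\sim c_0 r^2$ disjoint balls of the fixed radius $\rho_0$ (the radius on which the compact-domain construction lives, enlarged to absorb the $\le T_0$ translation needed to tune the phases), choosing in each tile the same local translation $t$; in each tile we obtain one $\epsilon_3/2$-stable compact nodal domain lying strictly inside that tile, hence disjoint from the others and contained in $B(0,r)$ for $r$ large. This yields $N_{f,\epsilon_3/2}(r)\ge c\,r^2$ with $c$ and all the $\epsilon$'s depending only on $\sup_i a_i$ and on the six support points and their masses, as claimed; note $\mathcal R_0$ is just large enough that the tiling is non-trivial and boundary effects are negligible.

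The main obstacle is the two-dimensional core claim: that for some choice of phases (hence, via $\epsilon$-independence, realizably along a translation slab) a sum of three plane waves in three distinct directions — perturbed by a uniformly small $C^0$ error — possesses a compact nodal domain with a uniform sign margin. Uniformity of that margin in the chosen six support directions and masses, over the compact parameter set where the masses are bounded below and the directions separated, is what makes $\epsilon_3$ (and $c$) depend only on the stated data; getting a clean, honestly elementary construction that is manifestly uniform — rather than one that degenerates as two of the three directions approach each other — is the delicate point, and is presumably the content of Proposition \ref{amplitudesrandom}.
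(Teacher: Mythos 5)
Your overall skeleton is the same as the paper's: use $(\epsilon_0,T_0)$-independence to realize arbitrary phase vectors on a translate $x\in B(x_0,T_0)$, approximate each cluster by a single direction $k_{i_l}$ paying $O(|y|\epsilon_1)$, reduce to the three lines $\pm\kappa_1,\pm\kappa_2,\pm\kappa_3$ supporting $\mu$ (note: this is \emph{three} cosines, not six — antipodal points give the same cosine), feed the result into a three-plane-wave compactness lemma (Proposition~\ref{amplitudesrandom}), and finish with a packing argument over $\sim r^2$ disjoint balls of a fixed radius. That part is right.

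The gap is in how you use the phase freedom. You write that $f$ is a small perturbation of $g(x)=\sum_j b_j\cos(\kappa_j\cdot x+\psi_j)$ ``with their masses as (essentially) the $b_j$'', and that one then ``chooses the phases $\psi_j$'' so that $g$ has a compact nodal domain. That cannot work: if the mass at $\pm\kappa_1$ dominates the masses at $\pm\kappa_2,\pm\kappa_3$, then $g$ is uniformly close to a single cosine $b_1\cos(\kappa_1\cdot x+\psi_1)$, whose nodal set is a family of bi-infinite curves, and this persists for every choice of $\psi_1,\psi_2,\psi_3$. Proposition~\ref{amplitudesrandom} does not say ``for some phases''; it says ``for amplitudes $(a_1,a_2,a_3,a'_1,\dots)$ in a thin open cone $\Omega$ and \emph{all} phases''. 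So the thing you must control is the \emph{amplitudes} $b_j$, not the overall phases $\psi_j$, and these are by no means equal to the masses.

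The missing mechanism is precisely where the $\epsilon_2$-non-domination hypothesis enters, and it is the heart of the argument. Inside a cluster $S_l$, after freezing the direction to $k_{i_l}$, the contribution $\sum_{i\in I_l}a_i\cos(k_{i_l}\cdot y+\theta_i(x))$ is a single cosine with effective amplitude $\big|\sum_{i\in I_l}a_i e^{i\theta_i(x)}\big|$ and some phase; $(\epsilon_0,T_0)$-independence lets us steer all the $\theta_i(x)$ independently, so this effective amplitude is \emph{tunable}, ranging (up to $\epsilon_0$-errors) from roughly $\epsilon_2$ — the non-domination floor — up to $\sum_{i\in I_l}|a_i|$. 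This is what lets you put the three main amplitudes into the thin cone $\Omega_{\mathbf k}$ of Lemma~\ref{triangle} and simultaneously make all other clusters' amplitudes small, whereas fixed $b_j$ with adjustable $\psi_j$ gives you none of this freedom. The paper implements the same observation via a pairing lemma and the identity
\begin{equation*}
\cos(k\cdot y+\phi_i)+\cos(k\cdot y+\phi_{i'})
=2\cos\Big(\tfrac{\phi_i-\phi_{i'}}{2}\Big)\cos\Big(k\cdot y+\tfrac{\phi_i+\phi_{i'}}{2}\Big),
\end{equation*}
which converts a random phase difference into a random amplitude; your ``sum to a single cosine'' formulation would actually be a slight streamlining of that step \emph{if} you made the amplitude-tuning explicit. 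As written, though, your reduction fixes the $b_j$ a priori, so the invocation of Proposition~\ref{amplitudesrandom} is not justified, and this is a genuine gap rather than a cosmetic one. The non-domination hypothesis on the three special clusters is not decorative: it is exactly what rules out a singleton cluster whose amplitude would be frozen and possibly dominant.
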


\begin{remarque}\label{stableperturb}
This result is stable by small perturbations of $\mathbf{a}$ and $\mathbf{k}$ in the following sense. Suppose that $I$, $\mathbf{k}$, $\mathbf{a}$ and $\boldsymbol{\theta}$ satisfy the hypotheses of the theorem. Then there exists $\epsilon_4>0$ such that, if $\mathbf{a}'$, $\mathbf{k}'$ and $\boldsymbol{\theta}'$ are such that $|\mathbf{a}- \mathbf{a}'|<\epsilon_4$ and $|\mathbf{k}'-\mathbf{k}|<\epsilon_4$, then
\begin{equation*}
N_{f_{\mathbf{a}',\mathbf{k}',\boldsymbol{\theta}}',\epsilon_3}(r) \geq \frac{c}{2} r^2.
\end{equation*}
\end{remarque}

\subsubsection*{An application to the torus}
The aim of this paragraph is to explain how Theorem \ref{ivrogne} can be used to find a lower bound on the number of nodal domains of some families of eigenfunctions on the torus.

These families will somehow be exceptional, since they are supported on a number of Fourier modes which does not depend on the frequency. This is hence very different from the framework of \cite{buckley2015number}, where the authors consider eigenfunctions which are supported on a large number of Fourier modes. It would be interesting to obtain a theorem which could describe the number of nodal domains in a larger framework containing these two situations.

Take $n\geq 3$, and fix any $k_1,...,k_n\in \Sp^1$ such that $k_i\neq \pm k_j$ for all $i,j\in \{1,...,n\}$. For any $\epsilon>0$ and $i=1,...,n$, we may find $k_i^\epsilon\in \Z^2$ and $\hat{k}_i^\epsilon$ such that
\begin{itemize}
\item
 $|k_i^\epsilon|=|\hat{k}_j^\epsilon|$ for any $i,j \in \{1,...,n\}$
 \item
  $\Big{|} k_i - \frac{k_i^\epsilon}{|k_i^\epsilon|}\Big{|}<\epsilon$ and $\Big{|} k_i - \frac{\hat{k}_i^\epsilon}{|\hat{k}_i^\epsilon|}\Big{|}<\epsilon$ for any $i=1,...,n$.
\item
The family $\bigcup_{i=1}^n \{k_i^\epsilon, \hat{k}^\epsilon_i\}$ is $\epsilon$-independent.
\end{itemize}
To obtain the last point, we simply made use of Remark \ref{generik}.

Take any sequence of amplitudes $a_1,...,a_n\in \R$ with $a_1, a_2,a_3 \neq 0$, and any sequence of real numbers $\theta_1,...,\theta_n, \hat{\theta}_1,...\hat{\theta}_n$.

The function $f^\epsilon \in C^\infty(\R^2)$ defined by
$$f^\epsilon(x) := \sum_{i=1}^n a_i \Big{(} \cos ( k^\epsilon_i\cdot x +\theta_i) +  \cos ( \hat{k}^\epsilon_i\cdot x +\hat{\theta}_i) \Big{)} $$
will then satisfy all the assumptions of Theorem \ref{phasesrandom}, provided that $\epsilon$ is taken small enough. Therefore, if $\epsilon$ has been taken small enough, we may find for any $\eta>0$ a constant $c>0$ such that 
\begin{equation}\label{extore}
N_{f^\epsilon,\eta(r)}\geq \frac{c}r^2.
\end{equation}

Now, for any $p\in \N$, the function
$$\phi_p(x):= f^\epsilon(px) = \sum_{i=1}^n a_i \Big{(} \cos ( pk^\epsilon_i\cdot x +\theta_i) +  \cos ( p\hat{k}^\epsilon_i\cdot x +\hat{\theta}_i) \Big{)} $$
defines a function on $\mathbb{T}^2$, which satisfies
$$-\Delta \phi_p= \lambda_p^2 \phi^p,$$
where $\lambda_p= p |k_1^\epsilon|$.

The bound (\ref{extore}) allows us to find a constant $c'>0$ such that
\begin{equation*}
N_{\phi_p} \geq c' \lambda_p^2.
\end{equation*}

\subsection{Proof of Theorem \ref{phasesrandom}}

The proof relies mainly on the following proposition, which we shall prove in the next subsection.
\begin{proposition}\label{amplitudesrandom}
There exists $\epsilon_5\bel 0$ such that the following holds.
Let $k_1,k_2,k_{3}\in \Sp^{1}$ be $(\epsilon_5,T)$-independent. Then there exists $R_0,\epsilon_4\bel 0$ such that for each $N\in \N$ and any $k'_1,...,k'_N\in \Sp^{1}$, there exist an open set $\Omega\subset \R^{3+N}$ such that for all $(a_1,a_2,a_{3},a'_1,...,a'_N)\in \Omega$, and $(\phi_1,\phi_2,\phi_{3},\phi'_1,...,\phi'_N)\in \R^{3+N}$, the function 
\begin{equation}\label{deffonction}
f(x):= \sum_{j=1}^{3} a_j \cos(k_j\cdot x+\phi_j)+\sum_{j=1}^N a'_j \cos(k'_j\cdot x+\phi'_j)
\end{equation}
 has an $\epsilon_4$-stable compact connected nodal domain in $B(0,T+R_0)$.
\end{proposition}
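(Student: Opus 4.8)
The plan is to handle the ``all phases'' quantifier by a translation argument using the $\epsilon_5$-independence, and then to build by hand one compact nodal domain for a sum of three plane waves whose amplitudes lie near the (essentially unique) linear relation $\sum_i c_ik_i=0$ satisfied by the three direction vectors. First I would record the \emph{translation reduction}. Being $(\epsilon_5,T)$-independent, the triple admits a unit vector $u$ as in \eqref{preskindep}; given an arbitrary phase vector $(\phi_1,\phi_2,\phi_3,\phi'_1,\dots,\phi'_N)$, choose $t\in[0,T]$ so that $t(k_1\cdot u,k_2\cdot u,k_3\cdot u)+(\phi_1,\phi_2,\phi_3)$ lies within $\epsilon_5$ (in the normalisation of \eqref{preskindep}) of a fixed configuration $\boldsymbol{\psi}^{*}$ to be chosen below. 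Translating by $tu$, on every fixed ball $B(0,R_0)$ the function $f(\cdot+tu)$ differs from $g^{*}(x):=\sum_{j=1}^{3}a_j\cos(k_j\cdot x+\psi^{*}_j)$ by at most $C\epsilon_5\sum_{j}|a_j|+\sum_{j}|a'_j|$ in $C^{0}$-norm, with $C$ absolute. Hence it suffices to produce an open set $\Omega\subset\R^{3+N}$ and constants $R_0,\mu\bel 0$ such that, for $\mathbf a\in\Omega$, this $C^{0}$-error is $<\mu/2$ and $g^{*}$ satisfies $g^{*}(0)\bel 0$ together with $g^{*}\le-\mu$ on some loop $\Gamma\subset B(0,R_0)$ encircling $0$: then the component of $\{f(\cdot+tu)\ne 0\}$ through $0$ is a compact set inside $\Gamma$, it stays so under $C^{0}$-perturbations of size $<\mu/2$, and translating back shows $f$ has an $(\mu/2)$-stable compact connected nodal domain in $B(tu,R_0)\subset B(0,T+R_0)$; we then set $\epsilon_4=\mu/2$.

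Next, the \emph{choice of configuration}. Since $k_1,k_2,k_3\in\R^{2}$ are linearly dependent there are $c_i$ with $c_1k_1+c_2k_2+c_3k_3=0$; no two of the $k_i$ being parallel (a consequence of $\epsilon_5$-independence for $\epsilon_5$ small, via \eqref{critereindep2}), all $c_i\ne 0$, and after replacing some $k_i$ by $-k_i$ — harmless, $\cos$ being even — we may take $c_i\bel 0$. Take $\boldsymbol{\psi}^{*}=0$, let $\Omega_0\subset\R^{3}$ be a small ball centred at $(c_1,c_2,c_3)$, and put $g^{*}_0(x):=\sum_i c_i\cos(k_i\cdot x)$. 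Write $g^{*}_0(x)=h(Lx)$ with $h(\theta)=\sum_i c_i\cos\theta_i$ on $(\R/2\pi\Z)^{3}$ and $Lx=(k_1\cdot x,k_2\cdot x,k_3\cdot x)$, which is injective onto the plane $P=(c_1,c_2,c_3)^{\perp}$ because $\sum_i c_ik_i=0$. For $\epsilon_5$ small, $\epsilon_5$-independence forbids any integer relation $\sum p_ik_i=0$, so $(c_1,c_2,c_3)$ is not proportional to an integer vector and $P\cap 2\pi\Z^{3}=\{0\}$; consequently $\cos\theta_i=1$ for all $i$ on $P$ forces $\theta=0$, so $h|_P\le\sum_i c_i$ with equality only at $0$, and $0$ is a strict local maximum of $g^{*}_0$ with positive value and negative-definite Hessian $-\sum_i c_i\,k_i\otimes k_i$. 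Thus everything reduces to the claim that the component of $\{g^{*}_0\bel 0\}$ through $0$ is bounded and surrounded by a ``moat'' where $g^{*}_0\le-\mu$.

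For this \emph{moat claim} and the ensuing stability bookkeeping: the prototype is the equilateral case, where the identity $\cos A+\cos B+\cos(A+B)=-1+4\cos\tfrac A2\cos\tfrac B2\cos\tfrac{A+B}2$ exhibits $\{g^{*}_0\bel 0\}$ as a disjoint union of bounded ``bumps'' separated by regions on which $g^{*}_0\le-\tfrac12$; in general one replaces this by a direct study of the sublevel sets of $h$ on the irrationally-immersed dense plane $P$, parametrising $P$ by $(\theta_1,\theta_2)$ with $\theta_3=-(c_1\theta_1+c_2\theta_2)/c_3$ and showing that $\{h\ge 0\}$ near $0$ confines $\theta$ to a bounded neighbourhood while $\epsilon_5$-independence prevents maxima of $h|_P$ from accumulating along any direction, so that $g^{*}_0\le-\mu$ on an encircling loop $\Gamma\subset B(0,R_0)$ with $\mu,R_0$ depending only on $k_1,k_2,k_3$. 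Granting $\mu\bel 0$: shrink $\Omega_0$ so that $\|g^{*}-g^{*}_0\|_{C^{0}(B(0,R_0))}<\mu/6$ for $\mathbf a\in\Omega_0$; given $N$ and $k'_1,\dots,k'_N$, set $\Omega:=\Omega_0\times\{(a'_j):|a'_j|<\mu/(6N)\}$ so that $\sum_j|a'_j|<\mu/6$; and use that $\epsilon_5$ was fixed small enough (once and for all) that $C\epsilon_5\sum_j|a_j|<\mu/6$ on $\Omega_0$. Then the $C^{0}$-error in the translation step is $<\mu/2$, and the argument above closes.

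The \emph{main obstacle} is the moat claim: establishing, for directions that need not sit at $120^{\circ}$, that $g^{*}_0$ genuinely has a bounded nodal domain ringed by a region on which it is $\le-\mu$. This rests on understanding $\sum_i c_i\cos\theta_i$ on the dense, irrationally-immersed plane $(c_1,c_2,c_3)^{\perp}$, and at the same time one must track how $\mu$ and $R_0$ degrade as the triple $k_1,k_2,k_3$ approaches a degenerate (nearly parallel) configuration, so that the estimate $C\epsilon_5\sum_j|a_j|<\mu/6$ stays compatible with the single prior choice of $\epsilon_5$; this is exactly where the quantitative form of $\epsilon_5$-independence — the explicit lower bound for $c(\epsilon)$ referenced after \eqref{critereindep2} — must be invoked to limit how degenerate an $\epsilon_5$-independent triple can be.
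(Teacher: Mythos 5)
Your outer scaffolding — translate by $tu$ using the $(\epsilon_5,T)$-independence so that the three phases become nearly $0$, reduce to a fixed model $g^*(x)=\sum_j a_j\cos(k_j\cdot x)$, and then absorb the $N$ extra waves by taking $|a'_j|$ small — is exactly the reduction the paper performs (Lemma~\ref{triangle} plus the short deduction that follows it). The genuine divergence, and the genuine gap, is in how you build the compact nodal domain of the three-wave model. You fix the amplitudes at the ``balanced'' values $(c_1,c_2,c_3)$ satisfying $c_1k_1+c_2k_2+c_3k_3=0$, and claim a \emph{moat}: a loop $\Gamma\subset B(0,R_0)$ around $0$ on which $g^*_0\le-\mu$. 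Nothing in your sketch actually delivers this. The observations you make — that $0$ is a nondegenerate strict local maximum, that $P\cap 2\pi\Z^3=\{0\}$, that ``$\epsilon_5$-independence prevents maxima of $h|_P$ from accumulating'' — show only that $0$ is the unique point where $h|_P$ attains $\sum c_i$; they do not show the positivity component of $0$ is bounded, let alone ringed by a region where $g^*_0$ is quantitatively negative. The natural candidate moat, the boundary of the parallelogram $\{|k_1\cdot x|\le\pi,\ |k_2\cdot x|\le\pi\}$, does not work for the balanced amplitudes: on the side $k_1\cdot x=\pi$ one has $g^*_0=-c_1+c_2\cos(k_2\cdot x)+c_3\cos(k_3\cdot x)$, whose supremum is $-c_1+c_2+c_3$, and this is strictly positive because $c_1=|c_2k_2+c_3k_3|<c_2+c_3$. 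For instance, with $k_1=(\cos\alpha,\sin\alpha)$, $k_2=(\cos\alpha,-\sin\alpha)$, $k_3=(-1,0)$ (so $c_1=c_2=1$, $c_3=2\cos\alpha$) a direct computation gives $g^*_0(\pi,\pi/\sin\alpha)=2-2\cos(\pi\cos\alpha)-2\cos\alpha\bel 0$, so the function returns to positive values on this side. Ruling out that this lobe connects to the component of $0$ is exactly the nontrivial content of the moat claim, and you have not supplied an argument for it.

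The paper sidesteps all of this by \emph{not} using the balanced amplitudes. In Lemma~\ref{triangle} it takes $a_1-\varepsilon<a_2<a_1$ with $\varepsilon$ small and $a_3=O(\varepsilon)$, and works on the parallelogram $\{|k_1\cdot x|\le\pi,\ |k_2\cdot x|\le\pi\}$: on the sides $k_1\cdot x=\pm\pi$ one has $g_{a_1,a_2,0}=-a_1+a_2\cos(k_2\cdot x)\le -\varepsilon$ outright, while on the sides $k_2\cdot x=\pm\pi$ one has $g_{a_1,a_2,0}=a_1\cos(k_1\cdot x)-a_2\le \varepsilon$, which is $\ge 0$ only in a thin passage $|k_1\cdot x|\lesssim\sqrt{\varepsilon}$; the third cosine, written as $\lambda' k_1+\mu' k_2$ and chosen with a sign condition on $\cos(\lambda'\pi)$, $\cos(\mu'\pi)$ and an $O(\varepsilon)$ amplitude $a_3$, closes this passage without spoiling the two ``easy'' sides. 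This is elementary and the moat and $R_0$ come out explicitly. Your balanced-amplitude route may well be salvageable, but the step you label ``the main obstacle'' is the whole content of the proposition, and it is missing; so as written the proposal does not constitute a proof.
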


\begin{remarque}
This proposition implies that, if $\mu$ is a symetric measure on $\mathbb{S}^1$ with at least 6 points in its support, then the Nazarov-Sodin constant of $\mu$, as defined in \cite{kurlberg2015non} is strictly positive.
\end{remarque}

\begin{remarque}
The set $\Omega\subset \R^{3+N}$ given by the proposition is almost conical, in the following sense. If $(a_1,a_2,a_{3},a'_1,...,a'_N)\in \Omega$, then if $\lambda>0$, the function $$\sum_{j=1}^{3} \lambda a_j \cos(k_j\cdot x+\phi_j)+\sum_{j=1}^N \lambda a'_j \cos(k'_j\cdot x+\phi'_j)$$ has a compact nodal domain which is $\lambda \epsilon_4$-stable and included in $B(0,T+R_0)$.
\end{remarque}

Let us explain in an informal way the idea of the proof of Theorem  \ref{phasesrandom} from proposition \ref{amplitudesrandom}.

We want to consider the function $f_{\mathbf{a},\mathbf{k},\boldsymbol{\theta}}(x+y)=f_x(y)= \sum a_i \cos (k_i \cdot x + k_i \cdot y + \theta_i)$, by seeing $y$ as a variable, and $x$ as a parameter. To show that $f$ has at least $cr^2$ nodal domains in $B(O,r)$, we will show that for every point $x_0$, there exists a parameter $x$ close to $x_0$, such that $f_x$ has at leat a compact nodal domain. By covering $B(O,r)$ by $c r^2$ balls centred around different $x_0$ for some $c>0$, we will obtain the result.

Proposition \ref{amplitudesrandom} roughly says that if we consider a sum of plane waves with independent random amplitudes, we will have a compact nodal domain with probability $>0$.

A priori, we do not have random amplitudes here, by the hypothesis of $\epsilon$-independence between the directions of propagation $k_i$ roughly tells us that we can see the $k_i\cdot x$ as random phases. To go from random phases $\phi_i$ to random amplitudes, we want to use the following trick:
\begin{equation}\label{astuce}
\cos (k_i \cdot y  + \phi_i)+\cos (k_{i'} \cdot  y + \phi_{i'}) = 2 \cos \Big{(}\frac{\phi_i-\phi_{i'}}{2}\Big{)}\cos \Big{(}\frac{k_i+k_{i'}}{2}\cdot y + \frac{\phi_i+\phi_{i'}}{2}\Big{)}.
\end{equation}

The factor $2 \cos (\frac{\phi_i-\phi_{i'}}{2})$ can then be seen as a random amplitude.
Equation (\ref{astuce}) hence allows us to go from a sum of two plane waves with independent random phases and \emph{having the same amplitude} to a plane wave with a random amplitude.

To apply this trick, and put the function $f_x(y)$ in the framework of Proposition \ref{amplitudesrandom}, it is therefore essential that the amplitudes which we consider are two by two equals. It is the hypothesis of $\epsilon$-non-domination which will ensure us that we are almost in this situation, and which will allow us to prove the theorem. 

\begin{proof}[Proof that Proposition \ref{amplitudesrandom} implies Theorem   \ref{phasesrandom}]
Let $\epsilon_1\bel 0$, and consider a disjoint partition of $\Sp^{d-1}= \bigsqcup_{l=1}^{L} S_l$ into sets of diameters all smaller that $\epsilon_1$. Let us denote by $I_l\subset I$ the subset of indices such that $k_i\in S_l$. For each $l$, \emph{we fix a $i_l$ such that $k_{i_l}\in S_l$}.

By assumption, on $\mu=\mu_{\mathbf{k},\mathbf{a}}$, by possibly taking $\epsilon_1$ smaller, we may suppose that there exists $3$ sets $S_{l_1},S_{l_2},S_{l_{3}}$ such that $S_{l_j}\cap (- S_{l_j'})= \emptyset$ for all $j,j'\in \{1,2,3\}$, and such that we have 
\begin{equation}\label{nonvide}
\mu(S_{l_j}) > 0,
\end{equation}
 for some constant $c_j\bel 0$. 

Take $x,y\in \R^2$. We  have 
\[
f(x+y) = \sum_{i\in I} a_i \cos\big{(}k_i\cdot y + \theta_i(x)\big{)},
\]
where $$\theta_i(x) := k_i\cdot x+\theta_i.$$

For each $l=1,...,L$, let us write
\begin{equation*}
f^l(x):=\sum_{i\in I_l} a_i \cos(k_i\cdot x + \theta_i).
\end{equation*}

Hence, if $x,y\in \R^2$, we  have 
\[
\begin{aligned}
f^l(x+y) = \sum_{i\in I_l} a_i \cos\big{(}k_i\cdot y + \theta_i(x)\big{)}
&=\sum_{i\in I_l} a_i \big{(}\cos(k_{i_l}\cdot y + \theta_i(x))+ O(|y|\epsilon_1)\big{)}.
\end{aligned}
\]

\paragraph{Using the non-domination}
Suppose now that the set $\{a_i; i\in I_l\}$ is $\epsilon_2$-non-dominated for some $\epsilon_2\bel 0$.

We may then find a partition of $I_l$ into two subsets $J_l$ and ${J'}_l$ such that
\begin{equation}\label{preskegal}
\sum_{i\in J_l} a_i = \sum_{i\in {J'}_l} a_i + r_l,
\end{equation}
where $|r_l|< \epsilon_2$.

\begin{lemme}
Equation (\ref{preskegal}) implies that it is possible to build $p_i\in \N$ for each $i\in I_l$ and weights $t_1^i,...,t_{p_i}^i$ such that the following holds, where we write $\tilde{J}_l:= \{(i,j); i\in J_1  ~\text{and } j\leq p_i \}$, and $\tilde{J'}_l:= \{(i',j'); i'\in {J'}_1  ~\text{and } j'\leq p_{i'} \}$.

\begin{itemize}
\item For every $i\in I_l$, $\sum_{j=1}^{p_i} t_j^i=1$.
\item There exists a bijection $\tau: (i,j)\mapsto (i'(i,j),j'(i,j))$ between $\tilde{J}_l$ and $\tilde{J'}_l$ such that
\begin{equation*}
t_j^i a_i = t_{j'(i,j)}^{i'(i,j)} a_{i'(i,j)} + r_j^i,
\end{equation*}
where
\begin{equation*}
\sum_{(i,j)\in \tilde{J}_l} | r_j^i| \leq r_l< \epsilon_2.
\end{equation*}
\item The set $\tilde{J}_l$ has a cardinal lower or equal than $|I_l|$.
\end{itemize}
\end{lemme}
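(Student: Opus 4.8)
The plan is to prove this purely combinatorially from the single scalar equation \eqref{preskegal}: we start with an approximate equality between two sums of positive reals and manufacture a matching between "pieces" of the left-hand side and "pieces" of the right-hand side that is exact up to a total error controlled by $r_l$. First I would rescale so that the common value $S:=\sum_{i\in J_l} a_i$ is thought of as a total ``length'', and view each $a_i$ with $i\in J_l$ as an interval of length $a_i$ laid end to end on $[0,S]$, and similarly each $a_{i'}$ with $i'\in J'_l$ laid end to end on $[0,S-r_l]$. Superimposing the two partitions of the (nearly identical) segment produces a common refinement: the union of all the breakpoints from both partitions cuts $[0,\min(S,S-r_l)]$ into finitely many subintervals. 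Each such subinterval of the refinement lies inside exactly one ``$J_l$-interval'' (indexed by some $i$) and inside exactly one ``$J'_l$-interval'' (indexed by some $i'$), and its length is what I will call $t^i_j a_i$ after I index, for each fixed $i$, the subintervals falling inside the $i$-th $J_l$-interval by $j=1,\dots,p_i$ and set $t^i_j$ equal to their lengths divided by $a_i$. By construction $\sum_{j=1}^{p_i} t^i_j = 1$ for every $i\in J_l$ — this is the first bullet for indices in $J_l$ — and symmetrically for $i'\in J'_l$; the map $\tau$ sending a refinement-interval to "(the $J_l$-index it sits in, its local rank), (the $J'_l$-index it sits in, its local rank)" is the desired bijection between $\tilde J_l$ and $\tilde J'_l$, and for refinement intervals that lie in the common part $[0,\min(S,S-r_l)]$ the two expressions $t^i_j a_i$ and $t^{i'}_{j'}a_{i'}$ are literally equal, so $r^i_j=0$.

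The only place error enters is the ``tail'': the little sliver $[\min(S,S-r_l),\max(S,S-r_l)]$ of total length $|r_l|$ that one of the two segments has and the other does not; here the bijection $\tau$ has to be completed by matching up whatever unmatched pieces remain, and for those matched pairs $t^i_j a_i - t^{i'}_{j'}a_{i'}=:r^i_j$ is nonzero, but all these discrepancies together are supported in a set of total length $|r_l|$, so $\sum_{(i,j)\in\tilde J_l}|r^i_j|\le |r_l|<\epsilon_2$, which is the error bound claimed. To make the completion of $\tau$ clean I would actually pad the shorter of the two segments: append to it one extra dummy piece of length $|r_l|$ (attached, say, to the last index), recompute the refinement of the now equal-length segments, and then forget the dummy piece at the end; this guarantees $\tau$ is a genuine bijection and keeps all the error concentrated in pieces touching that dummy region. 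Finally, the cardinality bound: the number of refinement intervals is at most (number of $J_l$-breakpoints) $+$ (number of $J'_l$-breakpoints) $+1 \le |J_l| + |J'_l| \le |I_l|$ by a standard count of how many pieces $n$ cuts plus $m$ cuts produce on an interval — so $|\tilde J_l|\le |I_l|$, as required. (Strictly, superimposing $|J_l|$ intervals and $|J'_l|$ intervals gives at most $|J_l|+|J'_l|-1\le |I_l|-1$ refinement pieces, comfortably within the bound; the dummy piece adds at most one, still $\le |I_l|$ if one is slightly careful, or one absorbs it by noting $|J_l|,|J'_l|\ge 1$.)

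I expect the genuinely fiddly step — the ``main obstacle'', such as it is — to be the bookkeeping around the tail of length $|r_l|$: one has to decide unambiguously how $\tau$ pairs up the leftover pieces on the longer segment (there may be several $J_l$-pieces but no matching $J'_l$-pieces left, or the reverse), and verify that whichever convention one fixes, (i) $\tau$ is still a bijection $\tilde J_l\to\tilde J'_l$ with $\sum_j t^i_j=1$ preserved, and (ii) the accumulated $\sum|r^i_j|$ does not exceed $|r_l|$ rather than, say, $2|r_l|$. The padding trick above is the cleanest way I know to sidestep this: after padding, the two segments have exactly the same length, the refinement is symmetric, $\tau$ is forced, every $\sum_j t^i_j=1$ holds on the nose, and the total discrepancy is exactly the length of the dummy piece, namely $|r_l|$. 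Everything else — existence of the refinement, the telescoping identities, the interval-counting for the cardinality — is elementary and I would not belabor it in the write-up.
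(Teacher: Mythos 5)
Your proposal is correct, but it takes a genuinely different route from the paper's. The paper proves the lemma by induction on $|I_l|$: it picks the \emph{smallest} amplitude $a_{i_0}$ (say in $J_l$), matches it exactly against a portion of size $a_{i_0}$ cut from an arbitrary $a_{i'_0}$ on the other side (error zero), replaces $a_{i'_0}$ by $a_{i'_0}-a_{i_0}$ in equation \eqref{preskegal}, and recurses; the error $r_l$ propagates untouched to the base case, and the cardinal bound is bookkept one step at a time. Your argument instead is a one-shot ``common refinement'' construction: lay both families of intervals end-to-end on the same segment, superimpose the two partitions, and read off the weights $t^i_j$ and the bijection $\tau$ directly from the refinement pieces, with the interval-counting $m+n-1$ immediately giving $|\tilde J_l|\le |I_l|$. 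Both approaches rest on the same underlying idea — split intervals until the pieces on both sides are in bijection — but the concrete constructions (and the matchings they produce) are different; yours is more geometric and transparent, the paper's is shorter and purely algebraic.

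One detail you should nail down in a write-up, since it is the one place where the padding trick can go subtly wrong: after extending the last right-hand interval from $a_{i'_0}$ to $\hat a_{i'_0}=a_{i'_0}+|r_l|$, you must define the weights on that index by dividing refinement-piece lengths by the \emph{padded} length $\hat a_{i'_0}$, not by $a_{i'_0}$, otherwise the first bullet $\sum_j t^{i'_0}_j=1$ fails. With that normalization, the matched pair $(i,j)\leftrightarrow(i'_0,j')$ in the tail has $r^i_j = c_k - c_k\,a_{i'_0}/\hat a_{i'_0} = c_k\,|r_l|/\hat a_{i'_0}$, and summing these over the pieces of $\hat a_{i'_0}$ gives total error exactly $|r_l|$, as you claimed. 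With this clarification, all three bullets hold and the proof is complete.
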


This lemma, a bit technical to state, simply says that it is possible to break the right-hand side and the left-hand side of (\ref{preskegal}) into small pieces, so that there are not more than $|I_l|$ pieces on the right and on the left, and that to each piece on the left corresponds a piece on the right which has almost the same amplitude.

\begin{proof}
The proof is done by recurrence on $|I_l|$. If the set has cardinal 2, the result is obvious by taking $p_i=1$, $t^i_1=1$ for the two elements $i$.

Suppose that $|I_l|$ has a cardinal grater than two. There is at least one smaller element in the $a_i$, for $i\in I_l$, which we shall write $a_{i_0}$. We may suppose for instance, without loss of generality, that $i_0\in J_l$. Take any $i'_0\in {J'}_l$. (\ref{preskegal}) may be rewritten as
\begin{equation*}
\sum_{i\in J_1\backslash \{i_0\}} a_i = \sum_{i\in {J'}_l \backslash i'_0} a_i + \big{(}1- \frac{a_{i_0}}{a_{i'_0}}\big{)} a_{i'_0} +r_l.
\end{equation*}
By applying the recurrence hypothesis to this new equation which contains one less term, we may deduce the lemma.
\end{proof}

We therefore have
\begin{equation*}
\begin{aligned}
f^l(x+y) &= \sum_{(i,j)\in \tilde{J}_l} t_j^i a_i\big{(}\cos(k_{i_l}\cdot y + \theta_i(x))+ O(|y|\epsilon_1)\big{)} \\
&+ \sum_{(i',j')\in {\tilde{J'}}_l} t_{j'}^{i'} a_{i'}\big{(}\cos(k_{{i'}_l}\cdot y + \theta_{i'}(x))+ O(|y|\epsilon_1)\big{)}\\
&= \sum_{(i,j)\in \tilde{J}_l} \Big{[} t_j^i a_i\big{(}\cos(k_{i_l}\cdot y + \theta_i(x))+ O(|y|\epsilon_1)\big{)} \\
&+ t_j^i a_i \big{(}\cos(k_{i_l}\cdot y + \theta_{i'(i,j)}(x))+ O(|y|\epsilon_1)\big{)}\Big{]}+ r_l(x,y)\\
&=  \sum_{(i,j)\in \tilde{J}_l} 2 t_j^i a_i\Big{[}\cos\big{(} \frac{\theta_i(x)-\theta_{i'(i,j)}(x)}{2}\big{)}\\
&\times \cos \big{(}k_{i_l}\cdot y+ \frac{\theta_i(x)+\theta_{i'(i,j)}(x)}{2}\big{)}+ O(|y|\epsilon_1)\Big{]}+ r_l(x,y)
\end{aligned}
\end{equation*}
where $|r_l(x,y)|< \epsilon_2$ for all $x,y\in \R^d$.

\subsubsection*{Turning independent vectors into independent amplitudes}
Since we assume that the vectors $(k_i)_{i\in I}$ are $(\epsilon_0,T_0)$-independent for some $T_0>0$, we have that for all $\psi_i\in \mathbb{T}^{|I|}$ and for all $x_0\in \R^2$, there exists $x\in B(x_0,T_0)$ such that for all $i\in I$, $|\theta_i(x)-\psi_i|\leq \epsilon_0$.

Since we have $|\tilde{J}_l|\leq |I_l|$, this means that the phases $\frac{\theta_i(x)-\theta_{i'(i,j)}(x)}{2}$ can $\epsilon$-approach any $\psi_i\in \mathbb{T}^{|I|}$ by moving $x$ in $B(x_0,T_0)$.

In particular, for all $x_0\in \R^2$ and for any sequence $\big{(}b_{(i,j)}\big{)}_{(i,j)\in\bigcup_l \tilde{J}_l}$ with $|b_{(i,j)}|\leq 3 |t^i_j a_i|$, we may find $x\in B(x_0,T_0)$ such that for all $(i,j)\in\bigcup_l\tilde{J}_l$, we have
\[\Big{|}b_{(i,j)}-2 t_j^i a_i\cos\big{(} \frac{\theta_i(x)-\theta_{i'(i,j)}(x)}{2}\big{)}\Big{|} \leq 2\epsilon_0 t_j^i a_i\leq C_0 \epsilon_0, \]
with $C_0:= \max\limits_{(i,j)\in\bigcup_l \tilde{J}_l} 2 t_j^i a_i$.

\subsubsection*{Applying Proposition \ref{amplitudesrandom}}
For each $x\in \R^d$, set 
\begin{equation*}
f_x(y)= \sum_{l=1}^L \sum_{(i,j)\in \tilde{J}_l} \Big{[}2 t_j^i a_i\cos\Big{(} \frac{\theta_i(x)-\theta_{i'(i,j)}(x)}{2}\Big{)}\Big{]} \cos \Big{(}k_{i_l}\cdot y+ \frac{\theta_i(x)+\theta_{i'(i,j)}(x)}{2}\Big{)}.
\end{equation*}

We want to apply Proposition \ref{amplitudesrandom} to the function $f_x$.

Thanks to (\ref{nonvide}), for each $S_{j_1},S_{j_2},S_{j_{3}}$, we may find $(i,j)\in \tilde{J}_{l_j}$ such that $t_j^ia_i\neq 0$. The $3$ terms containing these in the definition of $f_x$ will correspond to the $3$ first terms in (\ref{deffonction}), while the remaining terms in $f_x$ will correspond the remaining terms in (\ref{deffonction}).

We want to make sure that the amplitudes $\Big{[}2 t_j^i a_i\cos\big{(} \frac{\theta_i(x)-\theta_{i'(i,j)}(x)}{2}\big{)}\Big{]}$ fall in the open set $\Omega$ described in Proposition \ref{amplitudesrandom}. Thanks to the previous paragraph, we know that, if we take $\epsilon_0$ small enough, we can always find $x\in B(x_0,T_0)$ such that this is true. 

We obtain that the function $f_x$ has an $\epsilon_4$-stable nodal domain in $B(0,T+R_0)$.

Since we have
\begin{equation*}
\big{|}f_x(y)-f_{\mathbf{a},\mathbf{k},\boldsymbol{\theta}}(x+y)\big{|} \leq \epsilon_2 + O(|y|\epsilon_1),
\end{equation*}
we get that if we have $\epsilon_1$, $\epsilon_2$ and $\epsilon_3$ small enough, then $f_{\mathbf{a},\mathbf{k},\boldsymbol{\theta}}$ has an $\epsilon_3$ stable nodal domain in $B(x,T+R_0)$, and hence has an $\epsilon_3$-stable nodal domain in $B(x_0,T+R_0+T_0)$ for any $x_0$. We may then find $c\bel 0$ such that there are $c R^2$ disjoint balls of radius $T+R_0+T_0$ in $B(0,R)$ for $R$ large enough. Since each of these balls contains an $\epsilon_3$-stable domain for $f_{\mathbf{a},\mathbf{k},\boldsymbol{\theta}}$, the theorem follows.
\end{proof}

\subsection{Proof of Proposition \ref{amplitudesrandom}}\label{preuveampl}
Let $\mathbf{k}=(k_1,k_2,k_{3})\in \big{(}\Sp^{1}\big{)}^{3}$ be such that for any $1\leq ,j'j\leq 3$, $j\neq j'$, we have $k_j\neq \pm k_{j'}$.

For any family $\mathbf{a}= (a_1,a_2,a_{3})\in \R^{3}$, write
\begin{equation*}
g_{\mathbf{a}}(x) := \sum_{i=1}^{3} a_i\cos(k_i\cdot x).
\end{equation*}

The proof of Proposition \ref{amplitudesrandom} relies on the following lemma :
\begin{lemme}\label{triangle}
There exists $\epsilon_6, R_0\bel 0$ and an open set $\Omega_{\mathbf{k}}\subset \R^{3}$ such that for all $\mathbf{a}\in \Omega_{\mathbf{k}}$, 0 belongs to an $\epsilon_6$-stable compact nodal domain of $g_{\mathbf{a}}$, and this nodal domain is contained in $B(0,R_0)$. 
\end{lemme}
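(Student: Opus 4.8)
It suffices to exhibit \emph{one} amplitude vector $\mathbf a_0\in\R^3$, constants $R_0,\delta_0>0$, and a continuous loop $\gamma:\Sp^1\to B(0,R_0)$ of winding number $1$ about $0$ with $g_{\mathbf a_0}(0)>\delta_0$ and $g_{\mathbf a_0}<-\delta_0$ on $\gamma(\Sp^1)$. Indeed $\mathbf a\mapsto g_{\mathbf a}$ is linear, hence continuous from $\R^3$ into $C^0(\overline{B(0,R_0)})$, so both strict inequalities persist, with $\delta_0$ halved, on a small ball $\Omega_{\mathbf k}:=B(\mathbf a_0,\eta)$. Set $\epsilon_6:=\delta_0/4$. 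For $\mathbf a\in\Omega_{\mathbf k}$ and $\|w\|_{C^0}\le\epsilon_6$, the connected component of $\{g_{\mathbf a}+w\neq0\}$ containing $0$ — on which $g_{\mathbf a}+w>0$ by connectedness — is disjoint from $\gamma(\Sp^1)$, where $g_{\mathbf a}+w<0$; since the winding number about $0$ is locally constant on $\R^2\setminus\gamma(\Sp^1)$ and equals $1$ near $0$ but $0$ near infinity, that component lies inside $\gamma$, in particular inside $B(0,R_0)$. This is exactly the conclusion of the lemma, so everything reduces to producing the example.

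\textbf{Step 1: a trapping curve from two of the three waves.}
I would build $\mathbf a_0$ and $\gamma$ starting from two of the waves, say $k_1,k_2$, using
\[
\cos(k_1\cdot x)+\cos(k_2\cdot x)=2\cos(u\cdot x)\cos(v\cdot x),\qquad u:=\tfrac{k_1+k_2}{2},\quad v:=\tfrac{k_1-k_2}{2},
\]
where $u\perp v$ and $u,v\neq0$ since $k_1\neq\pm k_2$ and $|k_1|=|k_2|=1$. The right-hand side is positive on the rectangle $Q:=\{|u\cdot x|<\tfrac{\pi}{2},\ |v\cdot x|<\tfrac{\pi}{2}\}\ni0$, vanishes on $\partial Q$, is negative on the four rectangles adjacent to $Q$ across its edges, and has a transverse zero on $\partial Q$ everywhere except at the four corners of $Q$, where it has a non-degenerate saddle at value $0$. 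Take $\mathbf a_0=(1,1,t)$ with $|t|$ small: away from the corners, $t\cos(k_3\cdot x)$ displaces the zero set by $O(t)$, so $\{g_{\mathbf a_0}<0\}$ still contains a broken line running through the four negative rectangles just outside $\partial Q$ except near the corners; near a corner $V$, the perturbed function is, to leading order, that saddle shifted by the constant $t\cos(k_3\cdot V)$, which reconnects the two adjacent negative rectangles — keeping $Q$ trapped — exactly when $t\cos(k_3\cdot V)<0$. A short trigonometric computation shows the four corner conditions are simultaneously satisfiable for one sign of $t$ precisely when $\cos 2A+\cos 2B>0$, where $A:=\tfrac{\pi\,k_3\cdot u}{2|u|^2}$ and $B:=\tfrac{\pi\,k_3\cdot v}{2|v|^2}$ (morally: $k_3$ is not nearly aligned with $\pm k_1$ or $\pm k_2$). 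When this holds, close the broken line into a loop $\gamma$ of winding number $1$ about $0$ contained in $\{g_{\mathbf a_0}<-\delta_0\}$ for a small $\delta_0$, inside a ball slightly larger than $Q$, and invoke Step 0.

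\textbf{Step 2, and the main obstacle.}
The inequality $\cos 2A+\cos 2B>0$ can fail only when $k_3$ is nearly aligned with $\pm k_1$ or $\pm k_2$, in which case one repeats Step 1 with the pair $\{k_1,k_3\}$ or $\{k_2,k_3\}$. The genuinely delicate case is a \emph{nearly parallel} triple $k_1,k_2,k_3$: then all three pairings are bad, $g_{\mathbf a}$ looks locally like an (unbounded) band pattern, and no small perturbation of a two-wave picture works. There one must instead use comparable amplitudes, e.g. $\mathbf a_0=(1,1,1)$, for which $g_{\mathbf a_0}$ still becomes negative along every ray from $0$, at a distance of order the reciprocal of the angles between the $k_i$, so that $0$ sits in a bounded connected component of $\{g_{\mathbf a_0}>0\}$; the trapping loop in $\{g_{\mathbf a_0}<-\delta_0\}$ is then extracted from that by a standard planar separation argument (the unbounded complementary component has $\{g_{\mathbf a_0}<0\}$ along its boundary, so a level set $\{g_{\mathbf a_0}=-\delta_0\}$ running alongside encircles $0$). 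Carrying out this last step — showing that for a suitable comparable $\mathbf a_0$ the component of $\{g_{\mathbf a_0}>0\}$ containing $0$ is bounded, \emph{uniformly over all admissible triples} — is the technical heart of the argument. Everything here is elementary plane geometry, but it is genuinely two-dimensional: the whole scheme rests on the fact that a curve separates $\R^2$, which has no higher-dimensional analogue, and this is exactly why, as the author notes, the method is confined to dimension $2$.
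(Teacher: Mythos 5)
Your Step~0 is sound, and Step~1 is essentially a clean reformulation of one of the two cases in the paper's own proof. The saddle analysis at the corners of $Q$ is correct; and if one writes $k_3=\lambda'k_1+\mu'k_2$, a short computation gives $\cos 2A+\cos 2B = \cos((\lambda'+\mu')\pi)+\cos((\lambda'-\mu')\pi) = 2\cos(\lambda'\pi)\cos(\mu'\pi)$, so your condition $\cos 2A+\cos 2B>0$ is exactly the paper's ``$\cos(\lambda'\pi)$ and $\cos(\mu'\pi)$ have the same sign'' case.

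The gap is in Step~2, and it is genuine. The claim that $\cos 2A+\cos 2B>0$ can fail only when $k_3$ is nearly aligned with $\pm k_1$ or $\pm k_2$ — so that the residual bad case is a ``nearly parallel'' triple — is false. Take $k_1=(1,0)$, $k_2=(0,1)$ and $k_3$ at angle $\pi/8$ from $k_1$: the three directions are far from parallel, yet a direct computation shows that \emph{all three} pairings give $\cos 2A+\cos 2B<0$. The opposite-sign case is thus generic, not degenerate, and cycling through the three pairings does not cover it. Beyond that, the fallback you sketch (take $\mathbf{a}_0=(1,1,1)$ and extract a trapping loop by a planar separation argument) does not address the actual difficulty: the assertion that the positive component containing $0$ is bounded, uniformly over admissible triples, is precisely what has to be proved and is not supplied.

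The paper avoids the second case altogether by breaking the $k_1\leftrightarrow k_2$ symmetry at the level of amplitudes: it takes $a_1>a_2$ with $a_1-a_2=\varepsilon$ small, instead of $a_1=a_2$. Then on the edges $S_1=\{x\cdot k_1=\pm\pi\}$ the two-wave sum is already $\le-\varepsilon<0$, so that side is closed before the third wave is even added; only on $S_2=\{x\cdot k_2=\pm\pi\}$, in the thin window $|x\cdot k_1|\lesssim\sqrt\varepsilon$, can the sum reach $+\varepsilon$. The third cosine therefore only has to push the value down on those $S_2$-passages, where it contributes $\approx a_3\cos(\mu'\pi)$, while the induced perturbation $\approx a_3\cos(\lambda'\pi)$ on $S_1$ is absorbed into the quadratic margin $-\varepsilon-(k_2\cdot x)^2/2+o(\cdot)$ once one arranges (by swapping $k_1$ and $k_2$ if necessary) that $|\cos(\lambda'\pi)|<|\cos(\mu'\pi)|$ in the opposite-sign case. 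The only global condition needed is $\cos(\lambda'\pi)\neq-\cos(\mu'\pi)$, which holds automatically because $k_1,k_2$ are not collinear — so no residual case survives. Your approach can probably be repaired by importing exactly this asymmetry — take $a_1=1+\varepsilon$, $a_2=1$, so that only one family of saddles needs the third wave — but as written the proof is incomplete.
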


Note that the set $\Omega_{\mathbf{k}}$ is almost a cone, in the sense that if $\mathbf{a}\in \Omega_\mathbf{k}$ and if $\lambda\in \R\backslash\{0\}$, then zero belongs to a $(|\lambda|\epsilon_6)$-stable compact nodal domain of $g_{\lambda\mathbf{a}}$.

\begin{proof}[Proof that Lemma \ref{triangle} implies Proposition \ref{amplitudesrandom}]
Suppose that $\mathbf{k}=(k_1,k_2,k_{3})\in \big{(}\Sp^{1}\big{)}^{3}$ are $(\epsilon_5,T)$-independent, for some $\epsilon_5$ to be determined later.

Let $(\phi_1,\phi_2,\phi_{3})\in \R^{3}$. We may find $x\in B(0,T)$ such that for all $j\in \{1,2,3\}$, we have $|k_j\cdot x + \phi_j|\leq \epsilon_5$. We hence have for all $j\in \{1,2,31\}$ and for all $y\in \R^2$:
\[ \Big{|}\cos \big{(}k_j\cdot (x+y) + \phi_j\big{)} - \cos \big{(}k_j\cdot y\big{)}\Big{|}\leq C \epsilon_5, \]
for some universal constant $C\bel 0$.

In particular, if we take some coefficients $a_1,a_2,a_{3}\in \Omega_{\mathbf{k}}$ as in Lemma \ref{triangle}, and if $\epsilon_5$ is chosen small enough so that $C \epsilon_5 \sup_{i=1,2,3} |a_i| \leq \frac{\epsilon_6}{2}$, we see that $\sum_{j=1}^{3} a_j \cos(k_j\cdot x+\phi_j)$ has an $(\epsilon_6/2)$-stable compact nodal domain in $B(0,R_0+T)$.

Now, if $N\in \N$, we just have to impose that for all $j=1,...,N$, we have $|a'_j
|\leq \frac{\epsilon_6}{4N}$ to make sure that the function
\begin{equation*}
\sum_{j=1}^{3} a_j \cos(k_j\cdot x+\phi_j)+\sum_{j=1}^N a'_j \cos(k'_j\cdot x+\phi'_j)
\end{equation*}
 has an $\frac{\epsilon_6}{4}$-stable compact connected nodal domain in $B(0,T+R_0)$. This concludes the proof of the proposition.
\end{proof}

Before proving Lemma \ref{triangle}, let us give an informal sketch of the proof. Consider first the sum of two cosine $a_1\cos (k_1\cdot x)+ a_2 \cos (k_2\cdot x)$. It will never have a compact nodal domain as soon as $|a_1|\neq |a_2|$. However, if $|a_1|$ and $|a_2|$ are very close to each other, the nodal domains are very thin in certain places, as represented in Figure \ref{example}. By adding a third cosine in a precise way, it is possible to "obstruct these thin passages", thus building a compact nodal domain around the origin.

\begin{figure}
    \center
   \includegraphics[scale=0.6]{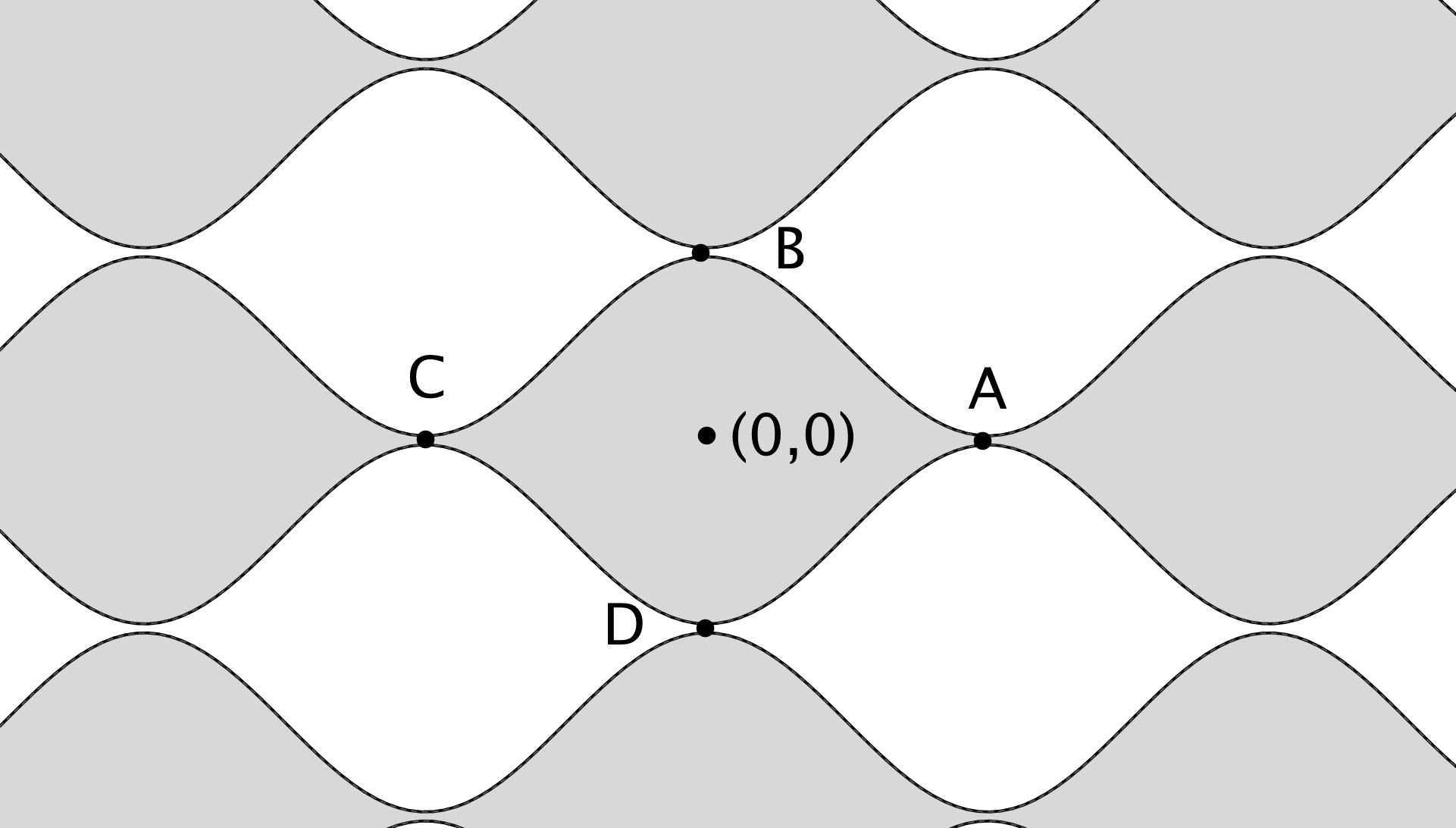}
    \caption{The sign of the function $f(x,y)= a_1\cos(x)+ a_2\cos(y)$, with $|a_2|$ slightly larger than $|a_1|$. $f$ is positive in the region in grey, and negative in the region in white. We want to add a third cosine which is positive in $A$ and $C$, and negative (or at least not too positive) in $B$ and $D$, so as to "close" the nodal domain containing $(0,0)$.}\label{example}
\end{figure}

\begin{proof}[Proof of Lemma \ref{triangle}]
We have by hypothesis three non zero real numbers $\lambda,\mu,\nu$ such that $\lambda k_1+ \mu k_2 + \nu k_3 =0$. Dividing by the coefficient with the greatest modulus and exchanging the vectors, we may assume that
\begin{equation*}
k_3 = \lambda' k_1+ \mu' k_2,
\end{equation*}
with $|\lambda'|\leq 1$, $|\mu'|\leq1$.

Furthermore, we must have
\begin{equation}\label{inegal}
(|\lambda'|- 1/2) \neq (1/2-|\mu'|).
\end{equation}
Indeed, if there were equality in (\ref{inegal}), then we would have $|\lambda' k_1| + |\mu' k_2|=|\lambda'|+|\mu'|=1=|k_3|=|\lambda' k_1 + \mu' k_2| $, which would imply that $k_1$ and $k_2$ are collinear.

In particular, we have $\cos(\mu'\pi)\neq - \cos(\lambda'\pi)$. Without loss of generality, we may thus suppose that

\begin{equation}\label{cosdif}
\text{If } \cos(\mu'\pi) \text { and } \cos(\lambda'\pi) \text{ do not have the same sign, then } |\cos(\lambda'\pi)| < |\cos(\mu'\pi)|.
\end{equation}

Without loss of generality, we will always suppose that $a_1\bel 0$.

\paragraph{Step 1 : understanding the sum of two cosine}

From now on, we will suppose that $$a_1-\varepsilon< a_2<a_1,$$ with $\varepsilon<<1$ to be determined later.

We shall write $S_1:= \{x\in \R^2; x\cdot k_1 = \pm \pi \text{ and } x\cdot k_2 \in [-\pi;\pi]\}$.
For $x\in S_1$, we have
$$g_{a_1,a_2,0}(x)= -a_1 + a_2 \cos (k_2 \cdot x) \leq -\varepsilon <0.$$

Furthermore, we have for $x\in S_1$, 
\begin{equation}\label{zombie}
g_{a_1,a_2,0}(x) = -a_1 + a_2 (1-(k_2\cdot x)^2/2 + o((k_2\cdot x)^2) \leq -\varepsilon - (k_2\cdot x)^2 + o((k_2\cdot x)^2). 
\end{equation}

We deduce from this that for any $A\bel 0$, there exists $c_A \bel 0$ independent of $\varepsilon$ and a $\varepsilon_A\bel 0$ such that for all $0 < \varepsilon \leq \varepsilon_A$ and for all $x\in S_1$, we have :

\begin{equation}\label{a}
|x\cdot k_2| \geq c_A \sqrt{\varepsilon}\Longrightarrow g_{a_1,a_2,0}(x) < -A \varepsilon
\end{equation}

Next, we consider the set $S_2:=\{x\in \R^2 \text{ such that } x\cdot k_2 = \pm \pi \text{ and such that }x\cdot k_1 \in [-\pi;\pi]\}$.
If $x\in S_2$, we have
\begin{equation*}
g_{a_1,a_2,0}(x) = a_1 \cos (k_1\cdot x) - a_2\leq \varepsilon.
\end{equation*}

Furthermore, just as before, we see that for each $B\bel 0$, there exists a $c_B\bel 0$ independent of $\varepsilon$ and an $\varepsilon_B\bel 0$ such that for all $0<\varepsilon \leq \varepsilon_B$, and for all $x\in S_2$, we have

\begin{equation}\label{b}
|x\cdot k_1| \geq c_B \sqrt{\varepsilon}\Longrightarrow g_{a_1,a_2,0}(x) < -B \varepsilon.
\end{equation}

\paragraph{Step 2 : adding a third cosine}

We now consider the function $g_{a_1,a_2,a_3}$. As long as $|a_3| < 2|a_1|-\varepsilon$, we have $g_{a_1,a_2,a_3}(0)\bel 0$. Let us find conditions on $a_3$ which will guarantee that $g_{a_1,a_2,a_3} (x) < 0$ if $x\in S_1\cup S_2$, which will show that $g_{a_1,a_2,a_3}$ has a compact nodal domain.

Let $x\in S_1$ be such that $|x\cdot k_2|\leq \varepsilon^{1/4}$. Then we have $k_3\cdot x = \lambda' k_1 \cdot x + \mu' k_2 \cdot x = \lambda'\pi + O(\varepsilon^{1/4})$. Therefore, we have
\begin{equation}\label{c}
\begin{aligned}
|x\cdot k_2|&\leq \varepsilon^{1/4}\Longrightarrow \cos(k_3\cdot x)= \cos (|\lambda'|\pi) + O(\varepsilon^{1/4}).
\end{aligned}
\end{equation}

Similarly, for all $x\in S_2$, we have
\begin{equation}\label{d}
|x\cdot k_1|\leq \varepsilon^{1/4}\Longrightarrow \cos(k_3\cdot x) = \cos(\mu'\pi) +  O(\varepsilon^{1/4}).
\end{equation}

\textbf{Suppose first that $\cos(\lambda' \pi)$ and $\cos (\mu' \pi)$ have the same sign.}

This means that $|\lambda'|, |\mu'| \geq 1/2$, so that the sign of $\cos(\lambda' \pi)$ must be negative.
We may take $$a_3\in \Big{[}-\varepsilon- \frac{2\varepsilon}{\min(|\cos(\lambda' \pi)|,|\cos(\mu' \pi)|)}; - \frac{2\varepsilon}{|\cos(\lambda' \pi)|}\Big{]}.$$

If $a_3$ is chosen so, then we have $g_{a_1,a_2,a_3}(0)\bel 0$ as long as $\varepsilon$ is small enough.
Take $$A= B = 2+ 2/\min(|\cos(\lambda' \pi)|,|\cos(\mu' \pi)|),$$ and $c_A$, $c_B$ as above.
Since $|a_3| < A\varepsilon,B\varepsilon$, we see from (\ref{a}) that for all $x\in S_1$ such that $|x\cdot k_2|\geq c_A \sqrt{\varepsilon}$, we have $g_{a_1,a_2,a_3}(x) <0$. Similarly, from (\ref{b}), we have that for all $x\in S_2$ such that $|x\cdot k_1|\geq c_B \sqrt{\varepsilon}$, we have $g_{a_1,a_2,a_3}(x) <0$.

Now, if $x\in S_1$ is such that $|x\cdot k_2|\leq c_A \sqrt{\varepsilon}$, or if $x\in S_2$ is such that $|x\cdot k_2|\leq c_B \sqrt{\varepsilon}$  we have from (\ref{c}) and (\ref{d}) that $a_3 \cos (k_3\cdot x) \leq -2\varepsilon$. Therefore, $g_{a_1,a_2,a_3}(x) <0$ for all $x\in S_1\cup S_2$. Therefore, $g_{a_1,a_2,a_3}$ has a compact nodal domain which is $\epsilon_6$-stable for $\epsilon_6$ small enough, and which belongs to $B(0,R_0)$ for $R_0$ large enough.

All in all, we have shown that $g_{a_1,a_2,a_3}$ has an $\epsilon_6$-stable compact nodal domain in $B(0,R_0)$ for all $(a_1,a_2,a_3)$ such that $a_1-a_2\in (0,\varepsilon_0)$ and $a_3 \in (- (a_2-a_1)- (a_1-a_2)/c; - (a_1-a_2)/c)$ for some $c$ depending only on $k_1,k_2,k_3$. This is a non-empty open set, and the connected  which proves the lemma.

\textbf{Suppose that $\cos(\lambda' \pi)$ and $\cos (\mu' \pi)$ have opposite signs.}

Then (\ref{cosdif}) implies that $ |\cos(\lambda'\pi)| < |\cos(\mu'\pi)|$. In particular, we have $|\cos(\mu'\pi)|\neq 0$.

Take $$a_3 \in \Big{[} \frac{-sgn(\cos(\mu' \pi)) \varepsilon}{1/3|\cos (|\mu'|\pi))|+ 2/3 |\cos (\lambda \pi)|} ;  \frac{- sgn(\cos(\mu' \pi))\varepsilon}{2/3|\cos (|\mu'|\pi))|+ 1/3 |\cos (\lambda \pi)|}\Big{]}.$$

If $a_3$ is chosen so, then we have $g_{a_1,a_2,a_3}(0)\bel 0$ as long as $\varepsilon$ is small enough.
Take $$A= B = 1/|\cos(\lambda' \pi)|,$$ and $c_A$, $c_B$ as above.
Since $|a_3| < A\varepsilon,B\varepsilon$, we see from (\ref{a}) that for all $x\in S_1$ such that $|x\cdot k_2|\geq c_A \sqrt{\varepsilon}$, we have $g_{a_1,a_2,a_3}(x) <0$. Similarly, from (\ref{b}), we have that for all $x\in S_2$ such that $|x\cdot k_1|\geq c_B \sqrt{\varepsilon}$, we have $g_{a_1,a_2,a_3}(x) <0$.

Now, if $x\in S_1$ is such that $|x\cdot k_2|\leq c_A \sqrt{\varepsilon}$,
we have from (\ref{c}) that $$0< a_3 \cos (k_3\cdot x) \leq \frac{\varepsilon|\cos(|\lambda' \pi)|}{1/3|\cos (|\mu'|\pi))|+ 2/3 |\cos (\lambda' \pi)|} + o(\varepsilon).$$ Since $\frac{|\cos(|\lambda' \pi)|}{1/3|\cos (|\mu'|\pi))|+ 2/3 |\cos (\lambda' \pi)|}<1$, we deduce from (\ref{zombie}) that $g_{a_1,a_2,a_3} <0$ on $S_1$.

If $x\in S_2$ is such that $|x\cdot k_2|\leq c_B \sqrt{\varepsilon}$  we have from (\ref{d}) that $$|a_3 \cos (k_3\cdot x)| \geq \frac{\varepsilon|\cos(|\mu' \pi)|}{2/3|\cos (|\mu'|\pi))|+ 1/3 |\cos (\lambda \pi)|} + o(\varepsilon).$$ Since
$\frac{|\cos(|\mu' \pi)|}{2/3|\cos (|\mu'|\pi))|+ 1/3 |\cos (\lambda \pi)|}\bel 1$, we deduce from (\ref{d}) that $g_{a_1,a_2,a_3} < 0$ on $S_2$.

Hence, $g_{a_1,a_2,a_3}(x) <0$ for all $x\in S_1\cup S_2$. Therefore, $g_{a_1,a_2,a_3}$ has a compact nodal domain, which is $\epsilon_6$-stable for $\epsilon_6$ small enough, and contained in $B(0,R_0)$ for $R_0$ large enough.

All in all, we have shown that $g_{a_1,a_2,a_3}$ has a compact nodal set for all $(a_1,a_2,a_3)$ such that $a_1-a_2\in (0,\varepsilon_0)$ and $a_3 \in ((a_1-a_2)/c; - (a_1-a_2)/c')$ for some $c,c'$ depending only on $k_1,k_2,k_3$. This is a non-empty open set, which proves the lemma.
\end{proof}

\section{Proof of Theorem \ref{ivrogne}}\label{preuveivrogne}

Before proving Theorem \ref{ivrogne}, let us recall the main fact from \cite{Ing2} which we will use in the proof.

\subsection{Recall of the results of \cite{Ing2}}\label{rappel}
Let $\mathcal{O}\subset X$ be a bounded open set, and let $\chi\in C_c^\infty(X)$ be equal to $1$ on $\mathcal{O}$. The main result in \cite{Ing2} implies that we can write  
\begin{equation}\label{qad2} 
\chi E_h(x,\omega;g) = \sum_{n=0}^{M |\log h|} \sum_{\beta\in \mathcal{B}_{\chi,n}} a_{\beta}(x;\omega,g,h) e^{i \varphi_{\beta}(x;\omega,g)/h} + R_{h}.
\end{equation}

Here, $M\bel 0$, and $\mathcal{B}_{\chi,n}$ is a set whose cardinal grows exponentially with $n$.
The $a_{\beta}$ are smooth functions of $x,\omega$, and their derivatives are bounded independently of $h$. The $\varphi_{\beta}$ are smooth function defined in a neighbourhood of the support of
$a_{\beta}$. 
We have
 \begin{equation*}\|R_{h}\|_{C^0}=O(h).
\end{equation*}

Furthermore there exists $\mathcal{P}<0$\footnote{$\mathcal{P}$ is actually the topological pressure associated to half the unstable Jacobian of the flow on the trapped set (see \cite{Ing2} for more details). The fact that this number is negative is equivalent, in dimension 2,  to condition (\ref{pression}).} such that, for any $\ell\in \mathbb{N}$, $\epsilon>0$, there exists $C_{\ell,\epsilon}$
such that
\begin{equation}\label{sheriff3}
\sum_{\beta\in\mathcal{B}_{\chi,n}} \|a_{\beta}\|_{C^1} \leq C_{\ell,\epsilon}
e^{n(\mathcal{P}+\epsilon)}.
\end{equation}

It was shown in \cite[Corollary 2]{Ing2} that for any $x\in X$ $\omega\in \mathbb{S}^1$, and $n_0\in \N$ we have
\begin{equation}\label{nonnul}
\sum_{n\geq n_0} \sum_{\beta\in \mathcal{B}_{\chi,n}} |a_\beta(x;\omega,g)|\bel 0.
\end{equation}

To obtain (\ref{qad2}), the author built a well-chosen open cover of $S^*X$, denoted $(V_b)_{b\in B}$, with all the $V_b$ bounded except one. The set $\mathcal{B}_{\chi,n}$ is actually a set of words on the alphabet $B$, of length approximately $n$.

\subsubsection*{Interpretation of $\varphi_\beta$ in terms of classical dynamics}
For each $\omega\in \Sp^1$, define 
\begin{equation*}\Lambda_{\omega}= \{(x,\omega); x\in X\backslash X_0\}.
\end{equation*}

If $\beta= b_1,...,b_N$, define 
\begin{equation*}
\Phi^N_\beta (\Lambda_\omega)= V_{b_N}\cap\Phi^1 \big{(}V_{b_{N-1}}\cap \Phi^1\big{(}...V_{b_1}\cap\Phi^1(\Lambda_\omega)... \big{)\big{)}}.
\end{equation*}

It was shown in \cite{Ing2} that
\begin{equation}\label{liendynamique}
\Phi^N_\beta(\Lambda_\omega) = \{(x,\nabla_x \varphi_\beta(x;\omega)); x\in O_\beta\},
\end{equation}
for some set $O_\beta\subset \pi_X(V_{b_N})$.
Therefore, $\nabla_x\varphi_x(x;\omega,g)$ is the direction of the unique trajectory coming from $\Lambda_\omega$ which was in $V_{b_k}$ at time $k$, and which is above $x$ at time $N$.

\begin{remarque}\label{Plusieurscoefs}
As explained in \cite{Ing2} (this is, for instance, a consequence of Corollary 4), if $K\neq \emptyset$, then for any $x\in X$ and any $\omega\in \Sp^1$, the vectors $\nabla_x \varphi (x,\omega,g)$ are different for different values of $\beta$.
\end{remarque}
\subsection{From Theorem \ref{phasesrandom} to Theorem \ref{ivrogne}}

\begin{proof}
Let us fix $\chi\in C_c^\infty(X)$ be equal to one on $\Omega$.
From now on, let us fix $x_0\in \Omega$, and consider a local chart $\psi$ from an neighbourhood of the origin in $\R^2$ to an open neighbourhood of $x_0$ included in $\Omega$. For all $\eta>0$, the results of section \ref{rappel} give us a $M_\eta>0$ such that for all $h>0$ small enough, we have for all $x\in B(0,h^{-1/3})$
\[\begin{aligned} E_h(\psi(h x);\omega)&= \sum_{n=0}^{M_\eta} \sum_{\beta\in \mathcal{B}_{\mathcal{K},n}} \big{(}a^0_{\beta}(x_0;\omega)+O(|x| h)\big{)} e^{\frac{i}{h} \varphi_{\beta}(x_0;\omega)+ i \nabla_{x_0} \varphi_\beta(x_0;\omega)\cdot x + O(|x|^2 h)} + R_\eta\\
&= \sum_{n=0}^{M_\eta} \sum_{\beta\in \mathcal{B}_{\mathcal{K},n}} a^0_{\beta}(x_0;\omega) e^{\frac{i}{h} \varphi_{\beta}(x_0;\omega)+ i \nabla_{x_0} \varphi_\beta(x_0;\omega)\cdot x} +R_\eta,
\end{aligned}
\]
where $\|R_\eta\|_{C^0(B(0,h^{-1/3})}\leq \eta$.

Let us write $F_h(x):= \Re(E_h(x,\omega_0)) + \Re(E_h(x, \omega_1))$. Since $\omega \mapsto a^0_{\beta}(x_0;\omega)$ is continuous for every $\beta$, we deduce that there exists $\epsilon_\eta>0$ such that if  
$|\omega_0-\omega_1| \leq \epsilon_\eta$, we have for $x\in B(0,h^{-1/3})$ :
\begin{equation*}
\begin{aligned}
F_h(\psi(hx)) &= \sum_{n=0}^{M_\eta} \sum_{\beta\in \mathcal{B}_{\mathcal{K},n}} \Re\Big{[} a^0_{\beta}(x_0;\omega_0) 
e^{\frac{i}{h} \varphi_{\beta}(x_0;\omega_0)+ 
i \nabla_{x_0} \varphi_\beta(x_0;\omega_0)\cdot x} \\
&+ a^0_{\beta}(x_0;\omega_1) e^{\frac{i}{h} \varphi_{\beta}(x_0;\omega_1)+ i \nabla_{x_0} \varphi_\beta(x_0;\omega_1)\cdot x}\Big{]} + 2R_\eta\\
&= \sum_{n=0}^{M_\eta} \sum_{\beta\in \mathcal{B}_{\mathcal{K},n}} \Re\Big{[}a_{\beta}^0(x_0;\omega_0) \big{(}e^{\frac{i}{h} \varphi_{\beta}(x_0;\omega_0)+ i \nabla_{x_0} \varphi_\beta(x_0;\omega_0)\cdot x} \\
& + e^{\frac{i}{h} \varphi_{\beta}(x_0;\omega_1)+ i \nabla_{x_0} \varphi_\beta(x_0;\omega_1)\cdot x }\big{)} \Big{]} + R'_\eta,
\end{aligned}
\end{equation*}
where $\|R'_\eta\|_{C^0(B(0,h^{-1/3})}\leq 3\eta$. The value of $\eta>0$ will be fixed at the end of the proof.

This can be rewritten in a more condensed way as
\begin{equation}\label{carte}
F_h(\psi(hx))= \sum_{n=0}^{M_\eta} \sum_{\beta\in \mathcal{B}_{\mathcal{K},n}} \Big{[}b_\beta \cos \big{(} x\cdot k_{\beta,\omega_0} + \theta^0_{\beta,h}\big{)} + b_\beta \cos \big{(} x\cdot k_{\beta,\omega_1} + \theta^1_{\beta,h}\big{)}\Big{]} + R'_{\eta}(x).
\end{equation}
Here, we have $k_{\beta,\omega_i} = k_{\beta,\omega_i}(x_0)= \nabla_{x_0} \varphi(x_0,\omega_i)$ for $i=0,1$ and $b_\beta = b_\beta(x_0)= |a_{\beta}(x_0,\omega_0)|$.

\begin{remarque}
If $\mathcal{O}\Subset X$ is an open set, and if $g'$ is a small enough perturbation of $g$ in the sense that $C^k(\mathcal{O})$, then the manifold $(X,g')$ will still satisfy (\ref{Hyperbolique}) and (\ref{pression}), so that the resuls from section \ref{rappel} will apply, and we will have a similar expression for $F_h(\psi(hx))$ on $(X,g')$. Furthermore, all the objects appearing in the decomposition (\ref{carte}) depend on the metric in a continuous way. When we will want to emphasize the dependence of the directions of propagation on the metric, we will write $k_{\beta,\omega_i}(g')$.
\end{remarque}

We want to apply Theorem \ref{phasesrandom} to the function
\begin{equation}\label{utile}
G(x)= G_{x_0}(x):= \sum_{n=0}^{M_\eta} \sum_{\beta\in \mathcal{B}_{\mathcal{K},n}} \Big{[}b_\beta \cos \big{(} x\cdot k_{\beta,\omega_0} + \theta^0_{\beta,h}\big{)} + b_\beta \cos \big{(} x\cdot k_{\beta,\omega_1} + \theta^1_{\beta,h}\big{)}\Big{]}.
\end{equation}

The first hypothesis in Theorem \ref{phasesrandom} is that there are at least six different $k_{\beta,\omega}$ with non-zero amplitudes We know from Remark \ref{Plusieurscoefs} that the $k_{\beta,\omega}$ take different values for different $\beta$. Furthermore, we have by (\ref{nonnul}) that infinitely many amplitudes $b_\beta$ are non-zero. We may therefore find a constant $c_0\bel 0$ and six indices $\beta_i$, $i=1,...,6$ such that $b_{\beta_i}\geq c_0$.

In Theorem \ref{phasesrandom}, the constants $\mathcal{R}_0$ and $\epsilon_3$ depend only on the supremum of the amplitudes, and on the positions and amplitudes associated to the six points mentionned in the statement. In particular, if we can check that the two other hypotheses of Theorem \ref{phasesrandom} are satisfied for metrics $g'$ in a neighbourhood of $g$, then $\mathcal{R}_0$ and $\epsilon_3$ will depend continuously on $g'$.

We may take $N$ large enough, and $h_0$ small enough so that for all $h\leq h_0$ and all $x_0\in \mathcal{O}'$, we have
\begin{equation}
\|R_{h}+R_\epsilon\|_{C^0(B(0,h^{-1/3})} \leq \frac{\epsilon_3}{2}.
\end{equation}

Note that the hypothesis of $\epsilon_1$-non-domination is always satisfied as soon as $\epsilon$ is chosen small enough, since the amplitudes in front of the cosine are two by two equal for close enough directions of propagation.

To make sure that the hypothesis of $\epsilon_0$-independence is satisfied, we must now perturb the metric in a generic way. 

\subsubsection*{Local perturbation of the metric}

The following lemma is standard, and can for example be seen as a consequence of Proposition 5 in \cite{rifford2012closing}. Note that it holds in any dimension. See Figure \ref{deviation} for an illustration of the statement.
\begin{lemme}\label{perturbdevi}
Let $\mathcal{O}\subset X$ be a small open set.
Fix a distance $d_{S^*X}$ on $S^*X$, and a way of computing the $C^k$ distance $d_{C^k(\mathcal{O})}$ between metrics in $\mathcal{G}_{\mathcal{O}}$. 

Let $\rho_1,\rho_2\in S^*X$ such that $\pi_X(\rho_1), \pi_X(\rho_2)\in \partial \mathcal{O}$, $\pi_X(\rho_1)\neq \pi_X(\rho_2)$. We suppose that there exists $T\in \R$ with $\Phi_g^T(\rho_1)=\rho_2$, and $\pi_X\big{(}\Phi_g^t(\rho_1)\big{)}\in \mathcal{O}$ for all $t\in(0, T)$. Then there exists $\epsilon_0\bel 0$ such that the following holds.

Let $\rho'_2\in S^*X$ with $\pi_X(\rho'_2\in \partial \mathcal{O})$ be such that $d_{S^*X}(\rho_2,\rho'_2)=\epsilon\leq \epsilon_0$. Then there exists $g'\in \mathcal{G}_\mathcal{O}$ with $\|g-g'\|_{C^k(\mathcal{O})} = o_{\epsilon\rightarrow 0} (1)$ and $T'\bel 0$ such that $\Phi_{g'}^{T'}(\rho_1)=\rho'_2$ and $\pi_X\big{(}\Phi_{g'}^{t}(\rho_1)\big{)}\in \mathcal{O}$ for all $t\in (0,T')$.
\end{lemme}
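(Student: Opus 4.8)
The plan is to realize $\rho'_2$ as $\Phi^{T'}_{g'}(\rho_1)$, where $g'$ is obtained from $g$ by a small, finite-parameter perturbation supported in the interior of the geodesic segment $\gamma(t):=\pi_X(\Phi_g^t(\rho_1))$, $t\in[0,T]$, and to track the dependence of the endpoint on the perturbation via the linearized (Jacobi) flow. Since the perturbation is supported away from $\partial\mathcal O$, the flows of $g$ and $g'$ agree near $\rho_1$ and near $\rho_2$; in particular $\rho_1,\rho'_2\in S^*X$ for $g'$, and once the perturbation is small the trajectory of $g'$ from $\rho_1$ stays $C^0$-close to $\gamma$ and hence remains in $\mathcal O$ on $(0,T')$. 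Everything thus reduces to the controllability statement that the geodesic endpoint (at a nearby time) can be moved so as to cover a full neighbourhood of $\rho_2$ in $S^*X$.

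\textbf{Choice of the perturbation family.} Pick $0<t_1<t_2<T$ such that $p_j:=\gamma(t_j)$ are distinct, lie in the open set $\mathcal O$ (hence at positive distance from $\partial\mathcal O$), the segment $\gamma$ meets a small ball around each $p_j$ in a single subarc (generic for $t_1,t_2$ on a finite segment), and — crucially — $t_1$ and $t_2$ are \emph{not conjugate} along $\gamma$; the last condition excludes only a discrete set of values of $t_2$ once $t_1$ is fixed. Fix balls $B_j\Subset\mathcal O$ around $p_j$ and, for each $j$, bump functions $u_j^{(1)},\dots,u_j^{(d-1)}$ supported in $B_j$ whose gradients at $p_j$ form a basis of $\dot\gamma(t_j)^{\perp}$. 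For $s=(s_j^{(l)})\in\R^{2(d-1)}$ set $g_s:=\big(1+\sum_{j,l}s_j^{(l)}u_j^{(l)}\big)^2 g$, so that $g_0=g$, $g_s\in\mathcal G_{\mathcal O}$, and $\|g_s-g\|_{C^k}=O(|s|)$.

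\textbf{First variation and surjectivity.} By the standard first-variation (Duhamel) formula for the Hamiltonian flow of $\tfrac12|\xi|^2_{g_s}$, and because $u_j^{(l)}$ is supported in the tiny ball $B_j$, crossed by $\gamma$ only near $t=t_j$, one gets $\partial_{s_j^{(l)}}\big|_{s=0}\Phi^T_{g_s}(\rho_1)=d\Phi_g^{T-t_j}\big(\eta_j^{(l)}\big)$, where $\eta_j^{(l)}$ is a purely vertical momentum kick at $\Phi^{t_j}_g(\rho_1)$ proportional to $\nabla u_j^{(l)}(p_j)$; as $l$ varies these span the vertical space $V_{p_j}=T_{\Phi^{t_j}\rho_1}(S^*_{p_j}X)$, which under the Jacobi-field identification is the space of normal Jacobi fields along $\gamma$ vanishing at $t_j$. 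Together with the flow direction $X_H(\rho_2)=\partial_\tau\big|_{\tau=T}\Phi^\tau_g(\rho_1)$, I claim
\[
d\Phi_g^{T-t_1}(V_{p_1})+d\Phi_g^{T-t_2}(V_{p_2})+\R\,X_H(\rho_2)=T_{\rho_2}(S^*X).
\]
Transporting back to $\Phi^{t_2}_g(\rho_1)$ by $d\Phi_g^{-(T-t_2)}$, this says the space of normal Jacobi fields vanishing at $t_1$, the space of those vanishing at $t_2$, and the line spanned by $\dot\gamma$ are in direct sum: the first two meet trivially exactly because $t_1,t_2$ are non-conjugate, $\dot\gamma$ lies in neither nor in their sum (it is not a normal Jacobi field), and the dimension count $(d-1)+(d-1)+1=2d-1=\dim S^*X$ finishes the claim.

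\textbf{Conclusion and main obstacle.} Hence $(s,\tau)\mapsto\Phi^\tau_{g_s}(\rho_1)$ is a local diffeomorphism at $(0,T)$ onto a neighbourhood of $\rho_2$ in $S^*X$, so for every $\rho'_2$ with $d_{S^*X}(\rho_2,\rho'_2)=\epsilon$ small there exist $s(\epsilon)=O(\epsilon)$ and $T'(\epsilon)=T+O(\epsilon)$ with $\Phi^{T'(\epsilon)}_{g_{s(\epsilon)}}(\rho_1)=\rho'_2$; take $g':=g_{s(\epsilon)}$, whence $\|g'-g\|_{C^k}=O(\epsilon)=o_{\epsilon\to0}(1)$, and the $g'$-trajectory from $\rho_1$ stays in $\mathcal O$ on $(0,T')$ by $C^0$-proximity to $\gamma$ (the perturbation being interior). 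The heart of the argument — and the main obstacle — is the controllability step of the third paragraph: securing an admissible pair of non-conjugate times whose associated balls are crossed only once by $\gamma$, and, in the borderline case where $\gamma$ meets $\partial\mathcal O$ only tangentially at $\rho_2$, checking that ``staying in $\mathcal O$ on $(0,T')$'' survives the perturbation (in the intended application the crossing is transversal, and in general one first removes this by an arbitrarily small preliminary perturbation). Alternatively, the whole statement may simply be quoted from Proposition~5 of \cite{rifford2012closing}.
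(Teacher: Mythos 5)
The paper offers no proof of this lemma: it declares it ``standard'' and refers to Proposition~5 of \cite{rifford2012closing} (a reference your final sentence independently identifies). Your blind sketch therefore supplies more than the source does, and it is the right controllability argument: a finite-parameter conformal perturbation supported in two small balls along the geodesic, the Duhamel first-variation formula for the Hamiltonian flow, the observation that each parameter enters to first order as a pushed-forward vertical momentum kick, the identification of the two vertical spaces with normal Jacobi fields vanishing at $t_1$ resp.\ $t_2$, non-conjugacy of $t_1,t_2$ to make those two spaces transversal, the flow direction to supply the last dimension, and the implicit function theorem to conclude. The dimension count $(d-1)+(d-1)+1=2d-1=\dim S^*X$ is correct, and the remark that the perturbation is supported in the interior of $\mathcal O$ so the flows of $g$ and $g'$ agree near $\rho_1$ and $\rho_2$ is exactly what is needed to make sense of ``hitting $\rho'_2$ in $S^*X$''.

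Two points would need tightening in a full write-up. First, the identity $\partial_{s_j^{(l)}}\big|_{s=0}\Phi^T_{g_s}(\rho_1)=d\Phi_g^{T-t_j}(\eta_j^{(l)})$ is not exact: Duhamel gives $\int_{I_j}d\Phi_g^{T-t}\bigl(X_{\partial_s H}(\Phi^t_g\rho_1)\bigr)\,dt$, where $I_j$ is the interval during which $\gamma$ lies in $B_j$; after normalizing by $|I_j|$, this agrees with the claimed vertical kick only up to an error $O(\operatorname{rad}B_j)$ coming from the variation of $d\Phi_g^{T-t}$ and of $\nabla u_j^{(l)}$ over $I_j$, and from the not-quite-vanishing of $u_j^{(l)}$ off $p_j$. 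Since surjectivity of a linear map onto a fixed finite-dimensional target is an open condition, shrinking $B_1,B_2$ removes the issue, but the equality as written is an overstatement rather than a computation. Second, the two ``obstacles'' you flag are real but mild: a non-closed geodesic segment of finite length has only finitely many (transversal) self-intersection points, so $t_1,t_2$ can be chosen to avoid them while staying non-conjugate; and the tangential-incidence concern at $\partial\mathcal O$ only affects the lemma read in maximal generality, not as applied in the paper, and is in any case repaired by a preliminary enlargement of $\mathcal O$ or shortening of the time interval.
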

\begin{figure}\label{deviation}
    \center
   \includegraphics[scale=0.6]{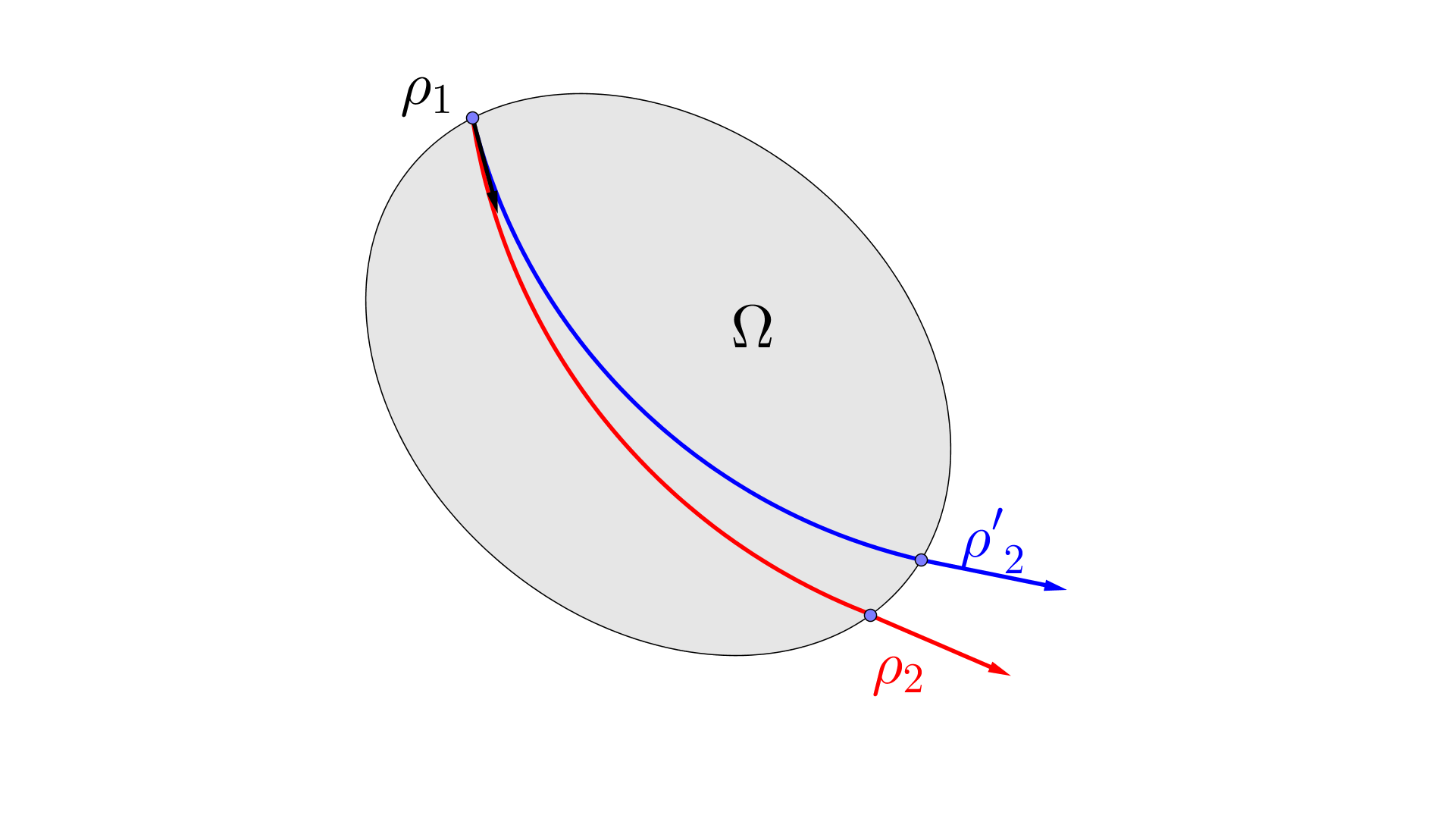}
    \caption{The curve in red going from $\rho_1$ to $\rho_2$ is a geodesic for the metric $g$. By perturbing the metric in $\Omega$, we can obtain a geodesic going from $\rho_1$ to $\rho'_2$.}
    \end{figure}
    \begin{definition}
    We will say that the property $P(g',\epsilon,\omega_0,\omega_1,N)$ is satisfied if the family $\{k_{\beta,\omega_i}(g'); \beta\in \mathcal{B}_{\mathcal{K},n}, n\leq N, i=0,1\}$ is $\epsilon$-independent.
    \end{definition}
  \begin{lemme}  There exists an open set $\mathcal{O}\Subset X_0$ such that for all $\epsilon\bel 0$, $\omega_0,\omega_1\in \mathbb{S}^1$ and $N\in \N$, $P(x,g',\epsilon,\omega_0,\omega_1, N)$ is true for a generic perturbation $g'$ of $g$ in $\mathcal{O}$ in any $C^k(\mathcal{O})$ topology for $k\geq 2$.
\end{lemme}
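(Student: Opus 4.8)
The plan is to combine the dynamical perturbation Lemma \ref{perturbdevi} with the genericity statement Remark \ref{generik}, using a finite induction over the words $\beta$ of length at most $N$. First I would fix the open set $\mathcal{O}\Subset X_0$ to be a small ball placed so that it is crossed transversally by the trapped set and so that every trajectory coming from $\Lambda_{\omega_i}$ which contributes a nonzero amplitude $a_\beta^0$ for $|\beta|\le N$ enters $\mathcal{O}$ at least once before reaching $x_0$; this is possible because the trapped set is nonempty and the open cover $(V_b)_{b\in B}$ from \cite{Ing2} can be refined near $\mathcal{O}$. The quantity we want to make $\epsilon$-independent is the finite family $\{k_{\beta,\omega_i}(g')\}$; by the criterion \eqref{critereindep2}, $\epsilon$-independence of a finite family of unit vectors fails only on a finite union of kernels of nonzero linear forms in the $k$'s, so it suffices to show that the map $g'\mapsto (k_{\beta,\omega_i}(g'))_{\beta,i}$, restricted to a neighbourhood of $g$ in $\mathcal{G}_\mathcal{O}$, has image meeting the complement of this finite union; openness of the good set then follows from continuity of $g'\mapsto k_{\beta,\omega_i}(g')$ (noted in the Remark after \eqref{carte}), and density from the perturbation argument below.

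The core step is to show that, given any target configuration of directions that is $\epsilon$-independent and arbitrarily close to the original $(k_{\beta,\omega_i}(g))$, there is a metric $g'$ arbitrarily $C^k(\mathcal{O})$-close to $g$ realising directions within $\epsilon$ of that target. Here I would enumerate the finitely many words $\beta_1,\dots,\beta_P$ with $|\beta_j|\le N$ and nonzero amplitude, ordered by length, and adjust the corresponding directions one at a time. For each $\beta_j$, the direction $k_{\beta_j,\omega_i}(g')=\nabla_{x_0}\varphi_{\beta_j}(x_0;\omega_i)$ is, by \eqref{liendynamique}, the direction at $x_0$ of the unique trajectory that followed the prescribed symbolic itinerary; since this trajectory enters $\mathcal{O}$, Lemma \ref{perturbdevi} lets me bend it inside $\mathcal{O}$ so that it exits in a slightly different direction $\rho_2'$, at the cost of a $C^k(\mathcal{O})$-perturbation that is $o(1)$ as the deviation tends to $0$. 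Tracking how this exit-direction change propagates to $x_0$ (which is a smooth, in fact invertible on small scales, map because the flow is a diffeomorphism and $\mathcal{O}$ is between the perturbation locus and $x_0$), I can prescribe $k_{\beta_j,\omega_i}$ to be any vector in a small neighbourhood of its original value; applying Remark \ref{generik} inductively — at stage $j$ I only need the new direction to avoid the finitely many linear conditions imposed by the already-fixed $k_{\beta_1},\dots,k_{\beta_{j-1}}$, which is an open dense condition — I arrive after $P$ steps at a metric $g'$, as close to $g$ as desired, for which the whole family is $\epsilon$-independent.

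The main obstacle is bookkeeping the interaction between the perturbations: bending the trajectory of $\beta_j$ inside $\mathcal{O}$ will also move the trajectories of other words $\beta_k$ that pass through $\mathcal{O}$, so the directions are not independent parameters. I would handle this by choosing $\mathcal{O}$ small enough and choosing, for each $\beta_j$ in turn, the perturbation supported on the portion of $\partial\mathcal{O}$ hit by the $\beta_j$-trajectory but away from the crossing points of the already-adjusted trajectories; because there are only finitely many words and finitely many crossing points, and each perturbation is $o(1)$, one can keep the cumulative error under control and keep the earlier directions fixed up to an error that can be absorbed (the $\epsilon$-independence condition being open). An alternative, cleaner route is to note that $\epsilon$-independence is an open condition and only requires avoiding a \emph{finite} set of linear subspaces, so one does not need full control of all directions simultaneously: it is enough that the \emph{joint} perturbation map $g'\mapsto (k_{\beta_j,\omega_i}(g'))$ be a submersion onto its image near $g$, which follows from Lemma \ref{perturbdevi} applied at well-separated times along the flow, and then an application of Baire's theorem (as in the discussion preceding Remark \ref{generik}) gives density of the good set. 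Openness plus density is exactly the definition of "generic perturbation", so the lemma follows.
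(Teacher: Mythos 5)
Your overall strategy -- combining Lemma \ref{perturbdevi} with Remark \ref{generik}, establishing openness by continuity of $g'\mapsto k_{\beta,\omega_i}(g')$, and density by finitely many localized metric perturbations that steer each trajectory independently -- is indeed the route the paper takes. There are, however, two points worth flagging.

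First, there is a genuine gap in your setup. You assume that \emph{every} trajectory from $\Lambda_{\omega_i}$ with $|\beta|\le N$ and nonzero amplitude enters $\mathcal{O}$ before reaching $x_0$, and you then plan to bend each one inside $\mathcal{O}$. But the word $\beta=(0,\dots,0)$ corresponds to the \emph{direct} trajectory: the straight line from infinity through $x_0$ that, when $x_0$ lies in the Euclidean region, never enters $X_0$ at all, and hence cannot be made to cross $\mathcal{O}\Subset X_0$. Its direction is simply $k_{(0,\dots,0),\omega_i}=\omega_i$ and is completely insensitive to perturbations of $g$ in $\mathcal{O}$. Your induction never gets a handle on this pair of vectors, so you cannot conclude $\epsilon$-independence of the whole family. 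The paper deals with this explicitly: it observes that $(\omega_0,\omega_1)$ is itself $\epsilon$-independent once $\omega_0\neq\omega_1$ are close enough, and then invokes the second clause of Remark \ref{generik} (the one allowing a fixed, already $\epsilon$-independent subfamily) so that only the remaining $k_{\beta,\omega_i}$ with $\beta\neq(0,\dots,0)$ -- all of which do pass through $X_0$ -- need to be perturbed. You should add this special treatment.

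Second, a stylistic but substantive difference: you propose to adjust the directions one at a time, accepting that later perturbations disturb the earlier ones and arguing that the cumulative error can be absorbed because $\epsilon$-independence is open. The paper instead achieves \emph{exact} decoupling: since the backward trajectories from $(x_0,k_{\beta,\omega_i})$ are pairwise distinct (Remark \ref{Plusieurscoefs}), one can choose in advance pairwise disjoint open sets $\mathcal{O}'_{\beta,\omega_i}\subset\mathcal{O}$, each met by exactly one of these trajectories, so that a perturbation supported in $\mathcal{O}'_{\beta,\omega_i}$ moves $k_{\beta,\omega_i}$ and leaves all other $k_{\beta',\omega_j}$ unchanged. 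This removes the need for any error-propagation argument and is the cleaner implementation of the idea you describe; if you keep your iterative version you would need to make the absorption step quantitative, which the paper's choice of disjoint supports renders unnecessary.
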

\begin{proof}

Recall that we write $\Lambda_\omega = \{(x,\omega), x\notin X_0 \}$, and that by (\ref{liendynamique}), $k_{\beta,\omega}(g)$ is the direction of the unique trajectory coming from $\Lambda_\omega$ which is above $x_0$ at time $n$, and which was in $V_{b_k}$ at time $k$ for $k\leq n-1$. 
Therefore, $k_{\beta,\omega}(g)$ depends continuously on $g$ in the $C^k(\mathcal{O})$ topology for $k\geq 2$, and hence $P(x,g',\epsilon,\omega_0,\omega_1, N)$ is true for $g'$ in an open neighbourhood of $g$ by Remark \ref{generik}. Let us show that this open set is dense.

Note that, $X$ being Euclidean near infinity, for any $x_0\in X$ and $\omega\in \Sp^1$, there exists at most one trajectory starting from a point in $\Lambda_\omega$ and going through $x_0$ without going through $X_0$. This is the case precisely when $x_O$ belongs to the Euclidean region, and the trajectory is a straight line. If such a trajectory exists, it therefore corresponds to $\beta=(0,...,0)$.

$x_0$ and $\omega$ being fixed, we may find an open set $\mathcal{O}\Subset X_0$ such that there exists at most one trajectory starting from a point in $\Lambda_\omega$ and going through $x_0$ without going through $\mathcal{O}$. $\mathcal{O}$ being an open set, this property will remain true if we perturb slightly the metric, and if we replace $\omega$ by some $\omega'$ close enough from $\omega$.

For each $\omega_i$, $i=0,1$ and each $\beta\in \mathcal{B}_{k}\neq (0,...,0)$, $k\leq N$, let us take a small open set $\mathcal{O}'_{\beta,\omega_i}\subset \mathcal{O}$ such that
\begin{equation*}
\{t\geq 0~; \pi_X (\Phi^{-t}(x_0,k_{\beta',\omega_j}))\in \mathcal{O}'_{\beta,\omega_i}\}= \left\{
    \begin{split}
   \emptyset \text{ if } \beta' \neq \beta \text{ or } i\neq j\\ 
    ]t_1,t_2[ \text{ with } t_1<t_2 \text{ if } \beta'=\beta \text{ and } i=j.
    \end{split}
  \right.
\end{equation*}
It is always possible to find such open sets, since the trajectories we consider are in finite number, and they are all disjoint. 
 
Let $k$ be a vector close enough from $k_{\beta,\omega_i}$, so that $$\{t\geq 0~; \pi_X (\Phi^{-t}(x_0,k_{\beta,\omega_i}))\in \mathcal{O}'_{\beta,\omega_i}\}= ]t'_1,t'_2[ \neq \emptyset.$$ We have in particular that $\Phi^{t'_1} (x_0,k)$ is close from $\Phi^{t_1} (x_0,k_{\beta,\omega})$.

By Lemma \ref{perturbdevi}, we know that it is possible to perturb the metric in $\mathcal{O}'_{\beta,\omega_i}$ so that the trajectory of $(\Phi_{g'}^t)$ which starts in $(\Phi_g^{-t_2}(x_0,k_{\beta,\omega_i})$ leaves $S^*\mathcal{O}'_{\beta,\omega_i}$ in the future in $\Phi_g^{-t'_1}(x_0,k)$.

By perturbing the metric slightly in such a way, we may therefore modify slightly a direction $k_{\beta,\omega_i}$ as we wish, without changing the other directions  $k_{\beta',j}$.

Since, on the other hand, for $\beta=(0,...,0)$, the family $(k_{\beta,\omega_0},k_{\beta,\omega_1})$ is $\epsilon$-independent as long as we take $\omega_0$ and $\omega_1$ close enough from each other, we deduce from Remark \ref{generik} that the set of metrics $g'$ such that $P(x,g',\epsilon,\omega_0,\omega_1, N)$ is satisfied is dense in a neighbourhood of $g$.
\end{proof}

\subsubsection*{End of the proof of theorem \ref{ivrogne}}
For a $C^k(\mathcal{O})$-generic perturbation of $g$, we may apply Theorem \ref{phasesrandom} to the function $G_{x_0}$ in (\ref{utile}). We obtain that there exists $c>0$ such that for $r$ large enough, this function has at least  $c r^2$ nodal domains which are $\epsilon_3$-stable. By taking $\eta<\epsilon_3/6$, the remainder in (\ref{carte}) can be made smaller that $\epsilon_3/2$, so that $\Re(E_h(\cdot,\omega_0;g')) + \Re(E_h(\cdot, \omega_1;g'))$ has at least $c h^{-2/3}$ nodal domains contained in a ball of radius $h^{2/3}$ around $x_0$.

Since the $k_{\beta,\omega_i}(x_0)$ and $b_{\beta}(x_0)$ depend continuously on $x_0$, we may use Remark \ref{stableperturb} to find $\epsilon_5>0$ small enough, so that for all $x_1\in B(x_0,\epsilon_5)$, $G_{x_1}$ has at least $cr^2/2$ nodal domains which are $\frac{\epsilon_3}{2}$-stable, so that $\Re(E_h(\cdot,\omega_0;g')) + \Re(E_h(\cdot, \omega_1;g'))$ has at least $c h^{-2/3}$ nodal domains contained in a ball of radius $h^{2/3}$ around $x_1$.

We may find $c'h^{-4/3}$ points $x_i\in B(x_0,\epsilon_5)$, with $c'>0$ independent of $h$, such that the balls $B(x_i, h^{2/3})$ are two by two disjoint. By what precedes, in each of these balls, $\Re(E_h(\cdot,\omega_0;g')) + \Re(E_h(\cdot, \omega_1;g'))$ has at least $c h^{-2/3}$ nodal domains. All in all, $\Re(E_h(\cdot,\omega_0;g')) + \Re(E_h(\cdot, \omega_1;g'))$ has at least $c' h^{-2}$ nodal domains in $B(x_0,\epsilon_5)$. This concludes the proof of Theorem \ref{ivrogne}. 
\end{proof}

\bibliographystyle{alpha}
\bibliography{references}
\end{document}